\documentclass[a4pape,reqno,11pt,oneside]{amsart}
\usepackage[foot]{amsaddr}
\usepackage{amssymb}
\usepackage{amsmath}
\usepackage{amsthm}
\usepackage{a4wide}
\usepackage{cite}

\usepackage{tikz}
\usepackage{stix}
\usepackage{color}
\usepackage{graphpap}
\usepackage{epsfig}
\usepackage{psfrag}
\usepackage{graphicx}
\usepackage{subfigure}
\usepackage{tikz-network}
\usetikzlibrary{arrows,snakes,backgrounds,automata,trees,shapes}
\usepackage{multirow}
\usepackage{threeparttable}
\usepackage{float}
\usepackage{rotating}
\usepackage{bbm}
\usepackage{comment}
\usepackage{url}

\newtheorem{theorem}{Theorem}[section]
\newtheorem{corollary}{Corollary}[section]

\newtheorem{prop}{Proposition}[section]

\theoremstyle{definition}

\theoremstyle{remark}

\begin{document}

\subjclass[2020]{60K35, 60K37, 82B26}
\keywords{Stochastic processes, Multi-agent systems, Rumor spreading, Ring lattice, Small-world network} 


\title[How far can a rumor travel without shortcuts?]{How far can a rumor travel without shortcuts?}

\author[Ana C. D\'iaz Bacca]{Ana C. D\'iaz Bacca$^1$}
\address{$^1$ Ana C. D\'iaz Bacca and Catalina M. R\'ua-Alvarez: Departamento de matem\'aticas y estad\'istica, Universidad de Nari\~no, Cll. 18 No. 50 - 2, Ciudadela Universitaria Torobajo, Pasto - Nariño, Colombia. E-mail addresses: \url{acdiaz@udenar.edu.co}, \url{catalina.rua@udenar.edu.co}}

\author[Pablo M. Rodriguez ]{Pablo M. Rodriguez$^2$}
\address{$^2$ Pablo M. Rodriguez (Corresponding author): Centro de Ci\^encias Exatas e da Natureza, Universidade Federal de Pernambuco, Av. Prof. Moraes Rego, 1235 - Cidade Universit\'aria - Recife - PE, Brazil. E-mail address: \url{pablo@de.ufpe.br}}

\author[Catalina M. R\'ua-Alvarez]{Catalina M. R\'ua-Alvarez$^1$}

\begin{abstract}
We consider a rumor model in which the network is divided into three classes of agents: ignorant, spreader, and stifler. A spreader transmits the rumor to each of its ignorant neighbors at rate one, and at the same rate, it becomes a stifler after interacting with other spreaders or stiflers. The overall process is described by a continuous-time Markov chain that represents the state of each node at any given time. The underlying network is a ring lattice with $n$ nodes, where each node is connected to its $2k$ nearest neighbors. This structure has often been used as the foundation for small-world network models, which are typically generated by rewiring or adding edges to introduce shortcuts. It is well known that when a rumor process takes place on such modified networks, the system undergoes a transition between localization and propagation at a finite mean degree. This transition illustrates the strong influence of shortcuts on the spreading of information. In this work, we adopt a complementary perspective by focusing on the rumor process within the pure ring lattice, without adding any shortcuts. Our aim is to show that even in this simplified setting, the model can exhibit behavior regarding the proportion of nodes reached by the rumor that is comparable to what is observed in homogeneously mixed populations. To this end, we identify the value of $k$ as a function of $n$ for which this behavior emerges and demonstrate that it scales as $\log n$. Our conclusions are drawn from the analysis of contrasting examples and from a broader examination of the general case through numerical simulations.
\end{abstract}

\maketitle

\section{Introduction}

\subsection{Rumor spreading on networks: a mathematical model} We consider a mathematical (probabilistic) model for the representation of rumor spreading in a population. The first references to this type of model can be found in \cite{dg,daley_nature,kendall, MT}. In particular,  we focus on the Maki-Thompson rumor model, which is defined as a continuous-time Markov chain. It is assumed that the population may be represented by a network in which the nodes represent agents and the edges represent possible interactions. Furthermore, the population is divided into three categories of agents: those who are unaware of the information (ignorants), those who disseminate it (spreaders), and those who are aware of the information but do not (stiflers). A spreader transmits the rumor to one of its ignorant neighbors at a rate of one. At the same rate, the spreader becomes stifler after contact with other spreaders or stiflers. See Figure \ref{FIG:transitionsMT}. In other words, interactions occur in accordance with the marks of independent Poisson processes. This model was introduced in \cite{MT} to describe the dissemination of information within a population. This is a simplified version of the Daley-Kendall rumor model, which was proposed as an alternative to the well-known SIR epidemic model for representing the spread of information. For further details, see \cite{dg,daley_nature}. The Maki-Thompson rumor model may be formally defined on a network with $n$ nodes as a continuous-time Markov process $\{\eta_t\}_{t\geq 0}$ with state space $\mathcal{S}=\{0,1,2\}^{[n]}$, where $[n]:=\{1,2,\ldots,n\}$ denotes the set of nodes, that is, at time $t$ the state of the process is a function $\eta_t: [n] \longrightarrow \{0,1,2\}$. We assume that each node $v$ of the network represents an agent, which is said to be ignorant if $\eta_t(v)=0,$ spreader if $\eta_t(v)=1$, and stifler if $\eta_t(v)=2,$ at time $t$. Then, if the system is in the configuration $\eta \in \mathcal{S},$ the state of node $v$ changes according to the following transition rates:

\begin{equation}\label{rates}
\begin{array}{rclc}
&\text{transition} &&\text{rate} \\[0.1cm]
0 & \rightarrow & 1, & \hspace{.5cm}  n_{1}(v,\eta),\\[.2cm]
1 & \rightarrow & 2, & \hspace{.5cm}   n_{1}(v,\eta) + n_{2}(v,\eta),
\end{array}
\end{equation}

\noindent where $$n_i(v,\eta)= \sum_{u\sim v} 1\{\eta(u)=i\}$$ 
is the number of neighbors of $v$ in state $i$ for configuration $\eta$, for $i\in\{1,2\}.$ Formally, \eqref{rates} means that if node $v$ is in state, say, $0$ at time $t$ then the probability that it will be in state $1$ at time $t+h$, for $h$ small, is $n_{1}(v,\eta) h + o(h)$, where $o(h)$ represents a function such that $\lim_{h\to 0} o(h)/h = 0$. For a review of recent rigorous results for this model on different graphs, we refer the reader to the following references: \cite{EAP,rada,juniormaki,juniorstochastic},  as well as the references cited therein. The primary rigorous results in the existing literature are related to the asymptotic behavior of the proportion of ignorants at the end of the process, in the context of finite networks, or the propagation or non-propagation of rumors in a specific sense, in the case of networks with an infinite number of nodes. The previous quantities we used to evaluate the impact of the information dissemination process within the network. {\color{black}For a deeper discussion of related studies on the spread of online rumors and similar epidemic-like processes we refer the reader to \cite{li,shi,sha,yang,zhu-ding,zhu}, and the references therein.}

\begin{figure*}[h!]
    \centering
    \includegraphics[width=10cm]{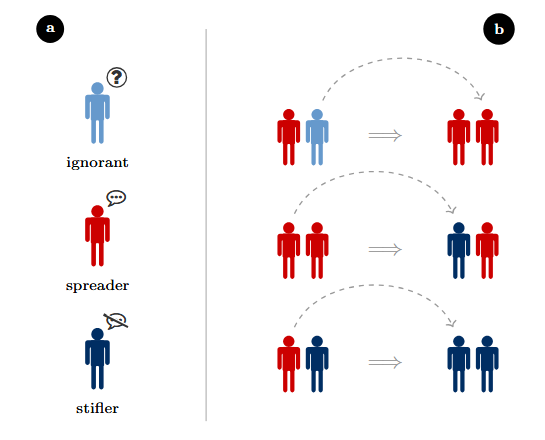}
   \caption{Illustration of all possible transitions between the different classes of individuals in the Maki--Thompson rumor model. (a) The population is subdivided into ignorants, spreaders and stiflers. (b) The different transitions involve interactions between these classes of individuals. When the change is a consequence of the interaction between two individuals, we present the result by a dashed arrow.}\label{FIG:transitionsMT}
\end{figure*}

\subsection{Rigorous results for regular networks} The appropriate approach for dealing with this model depends on the specific network under consideration. It should be noted that networks are mathematically defined as graphs; consequently, some references refer to the subject as an information transmission model on graphs. To the best of our knowledge, rigorous results have been established exclusively for the Maki-Thompson model in the context of ``friendly'' and mainly regular graphs. The model was initially formulated in the context of complete graphs, wherein a homogeneous and completely mixed population was assumed. In this case, the standard approach is to analyze a limiting system of differential equations, which can be derived from the application of well-established techniques for density-dependent Markov chains. For further details, please refer to references
\cite{lebensztayn/machado/rodriguez/2011a,lebensztayn/machado/rodriguez/2011b,rada}. The analysis of differential equations facilitates an examination of the model and enables the attainment of findings pertaining to intriguing generalizations. The approach used to deal with a different graph topology depends on the specific network in question. In \cite{raey}, the authors defined a rumor model on a $d$-dimensional hypercubic lattice and obtained conditions on the contact rates under which the rumor either survives with positive probability or becomes extinct almost surely. Although the model considered by the authors is not the Maki-Thompson model, it may be used to gain insight into the behavior of this class of models on a regular network with an infinite number of nodes. The approach adopted in \cite{raey} involved coupling the original model with an oriented percolation model and applying the results derived from the contact process. See \cite{grimmett} for a deeper discussion on these probabilistic models. Using a different approach,  \cite{juniormaki} studied the Maki-Thompson model and its $k$-stifling variant in infinite Cayley trees. Some of their results were later extended to random trees in  \cite{juniorstochastic}. This case becomes increasingly relevant when we notice that numerous random network models that are better suited to represent a population behave locally as trees. In \cite{juniorstochastic}, researchers employed a comparative approach by examining the model in conjunction with suitably defined branching processes. This approach was extended recently by \cite{puerres} to localize critical thresholds for the rumor model on inhomogeneous trees formed by hubs, regular nodes and leafs. For more expansive or heterogeneous networks, the conventional methodology entails analyzing approximations for the model through computational simulations and mean-field arguments. For illustrative purposes, see references \cite{MNP-PRE2004,moreno-PhysA2007,zanette,zanette02}. 

{\color{black}It is worth pointing out that when the population is represented by a finite network, the previous works have shown that small changes in the network’s topology can produce abrupt differences in how a rumor spreads. If the network is highly clustered, meaning that it contains many short-cuts, a large propagation is expected, usually reaching a positive proportion of the population. Otherwise, the rumor mainly spreads locally, moving through paths of vertices connected only by local edges. In infinite networks, such as tree-like graphs, the cited works have shown that propagation can still occur, even if the proportion of affected individuals remains small. Note that in all cases, adding randomness to the network generally helps reveal different spreading regimes within the same family of network structures.}

\subsection{Stochastic rumors on ring-lattices: What makes it a subject of interest?} 

The ring lattice has often been used as the foundation for small-world network models, which are typically generated by rewiring or adding edges to introduce shortcuts. In a ring lattice with $n$ nodes, the nodes can be arranged in a circle, with each node connected to the $2k$ nearest neighbors. The Watts–Strogatz network is constructed by removing each edge with probability $p$, independent of the other removals, and rewiring it to produce an edge between a pair of nodes selected at random. The Maki-Thompson model on the Watts-Strogatz network was studied in \cite{zanette,zanette02}. It has been shown that this model exhibits a transition between the localization and propagation regimes at a finite value of network randomness. More precisely, the transition occurs between a regime in which the rumor dies in a small neighborhood of the initial spreader and a regime in which it spreads to a positive fraction of the entire population \cite{zanette}. Furthermore, \cite{zanette02} complemented the analysis with a numerical approach and performed a quantitative characterization of the evolution of the two regimes. Another model of a small-world network is the Newman–Watts variant, which is obtained by introducing shortcut edges between pairs of nodes in a similar way to that employed in a Watts–Strogatz network, yet without the removal of edges from the underlying lattice. As proven by \cite{EAP}, this system undergoes a transition between two distinct regimes: localization and propagation. In the localization regime, the final number of stiflers is at most logarithmic in the population size, whereas in the propagation regime, the final number of stiflers grows algebraically with the population size. This transition occurs at a finite value of network parameter $c$. In this model, $c$ represents the mean number of nodes connected to a given node through shortcut edges.

 Note that the results cited emphasize the strong influence of shortcuts on the spreading of information. In this work, we adopt a complementary perspective by focusing on the rumor process within the pure ring lattice, without adding any shortcuts. Our aim is to show that even in this simplified setting, the model can exhibit behavior regarding the proportion of nodes reached by the rumor that is comparable to what is observed in homogeneously mixed populations. To this end, we identify the value of $k$ as a function of $n$ for which this behavior emerges and demonstrate that it scales as $\log n$. 

\subsection{Contribution and organization of the paper} We consider a ring lattice with $n$ nodes, where the nodes are arranged in a circle, and each node is connected to the $2k$ nearest neighbors. We define the rumor process in this network and identify the value of $k$ as a function of $n$ for which the model can exhibit behavior with respect to the proportion of nodes reached by the rumor that is comparable to what is observed in homogeneously mixed populations. Section 2 is devoted to the discussion of the results for two contrasting cases; namely, the complete graph ($k=\lfloor n/2 \rfloor$) and the cycle graph ($k=1$). Section 3 summarizes the numerical results for the general case. In this section, we conclude that the remaining proportion of ignorants is approximately the same as that for the complete graph, provided that $k(n)$ is of the order of $\log n$. 

\subsection{A comment on notation} We consider stochastic processes associated with different graphs. For simplicity, we abuse the notation by omitting the graph dependency, except in cases where the context does not make it clear. In general, we use the notation $\{\eta_t\}_{t\geq 0}$ for a continuous-time Markov chain with state space $\mathcal{S}=\{0,1,2\}^{V}$, where $V$ is the set of nodes of the respective graph, and the process evolves with the rates given by \eqref{rates}. For any $t\geq 0$,  $X(t),Y(t)$, and $Z(t)$ represent the total numbers of ignorants, spreaders, and stiflers at time $t$. Because our work includes a numerical analysis based on Monte Carlo simulations, we used $X(n,k),Y(n,k)$, and $Z(n,k)$ to denote the average number of ignorants, spreaders, and stiflers remaining at the end of the process defined on the $2k$-regular ring lattice with $n$ nodes. The other random variables and averages from the simulations follow similar notation. Because our results are asymptotic, we use the asymptotic notation $f(n)=o(g(n))$ and $f(n)=O(g(n))$, respectively, for functions $f$ and $g$ such that $\lim_{n\to \infty} f(n)/g(n)=0$ and $|f(n)|\leq M |g(n)|$, for all $n\geq n_0$, where $M$ is a positive constant and $n_0$ is a real number. In addition, we use $\lfloor a \rfloor$ ($\lceil a \rceil$) to denote the floor (ceiling) function which maps $a$ to the greatest (least) integer less (greater) than or equal to $a$.

\section{Theoretical results for two contrasting networks} 

We consider a ring lattice with $n$ nodes, where the nodes are arranged in a circle, and each node is connected to the $2k$ nearest neighbors. That is, the set of nodes is $[n]=\{1,\ldots,n\}$, and the set of edges is given by those $(i,j)$ such that $|i-j|\equiv t \mod \left(n-k\right)$, for $0\leq t \leq k$ and $i \neq j$, see Figure \ref{fig:regular}. Note that according to $k$ this family of graphs includes two contrasting examples, which are the complete graph and the cycle graph. In this section, we discuss the behavior of the Maki-Thompson model in these networks. 

\begin{figure*}[h!]
    \centering
    \includegraphics[width=\linewidth]{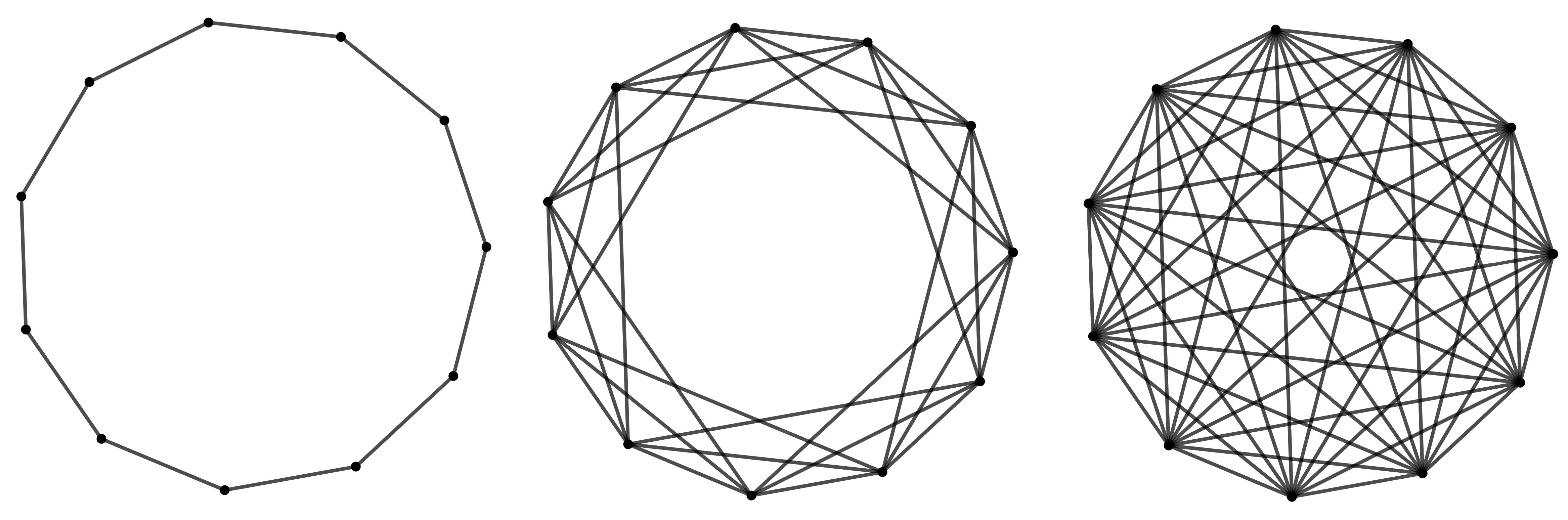}
    \caption{A ring lattice with $n=11$ nodes and each node connected to the $2k$ nearest neighbors. From left to right we have $k=1$ (cycle graph), $k=3$ and $k=5$ (complete graph), respectively.}
    \label{fig:regular}
\end{figure*}

\subsection{Complete graph} The Maki-Thompson rumor model was originally formulated by assuming a homogeneous and mixed population, which is the same to say the  population may be represented by a complete graph. In a complete graph, any pair of nodes is connected via an edge. This is the case $k=\lfloor n/2 \rfloor$ of the ring lattice with $n$ nodes. The random variables of interest are the quantities of ignorants, spreaders, and stiflers at any time $t\geq 0$ which, using the notation of the Markov process $\{\eta_t\}_{t\geq 0}$  with state space $\mathcal{S}=\{0,1,2\}^{[n]}$, $[n]:=\{1,2,\ldots,n\}$, and rates given by \eqref{rates}, can be written as
\begin{equation}\label{eq:variables}
\begin{array}{ccc}
X(t) &=&\displaystyle \sum_{i=1}^{n}\mathbb{I}_{\{\eta_t(i)=0\}},\\[.2cm]
Y(t)&=&\displaystyle \sum_{i=1}^{n}\mathbb{I}_{\{\eta_t(i)=1\}},\\[.2cm]
Z(t)&=&\displaystyle \sum_{i=1}^{n}\mathbb{I}_{\{\eta_t(i)=2\}},
\end{array}
\end{equation}
where $\mathbb{I}_A$ denotes the indicator random variable of event $A$. Note that $X(t)+Y(t)+Z(t)=n$ for any $t\geq 0$. Thus, the Maki-Thompson rumor model on a complete graph may be studied through a continuous-time Markov chain $\{(X(t),Y(t))\}_{t\geq 0}$. The absorption time of the process is the random variable
\begin{equation}\label{eq:taun}
\tau^{(n)}:=\inf\{t\geq 0: Y(t)=0\},
\end{equation}
and counts the time at which there are no spreaders in the population. Thus, the remaining proportion of ignorants is given by $X(\tau^{(n)})/n$, and the remaining proportion of stiflers is $Z(\tau^{(n)})/n=1-X(\tau^{(n)})/n$. The asymptotic behavior of these random quantities is well known and summarized in the following theorems.

\begin{theorem}\label{thm:LLNKn} \cite{sudbury,lebensztayn/machado/rodriguez/2011a,lebensztayn/machado/rodriguez/2011b} Consider the Maki-Thompson model on the complete graph with $n$ nodes, $\{X(t),Y(t)\}_{t\geq 0}$, with $X(0)=n-1$ and $Y(0)=1$. Then
\begin{equation*}
    \lim_{n \to \infty}\frac{X(\tau^{(n)})}{n}=x_\infty \ \ \text{in probability},
\end{equation*}
where 
\begin{equation}\label{eq:xinf}
x_\infty=-\left(\frac{1}{2}\right)W(-2e^{-2})\approx 0.203188,
\end{equation}
where $W$ represents the Lambert $W$ function, which is the inverse of the function $x \mapsto xe^x$.
\end{theorem}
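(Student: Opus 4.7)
The plan is to apply the classical theory of density-dependent continuous-time Markov chains (Kurtz). On the complete graph, the bivariate process $(X(t),Y(t))$ is itself Markovian, with only two possible transitions: $(X,Y)\mapsto(X-1,Y+1)$ at rate $XY$ and $(X,Y)\mapsto(X,Y-1)$ at rate $Y(n-1-X)$. Setting $x=X/n$ and $y=Y/n$, both rates factor as $n^{2}$ times a smooth function of the densities, so after rescaling time by a factor of $n$ the drift becomes $F(x,y)=(-xy,\, y(2x-1))$ and the standard fluid-limit theorem yields uniform convergence in probability, on any compact time interval, of the rescaled process to the solution of
\[
\dot{x}=-xy,\qquad \dot{y}=y(2x-1).
\]

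The next step is to solve this ODE qualitatively. It admits the conserved quantity $H(x,y)=y+2x-\log x$, since $\dot{H}=y(2x-1)+(2-1/x)(-xy)=0$. With initial data $(1,0^{+})$ we obtain $H\equiv 2$, so the trajectory lies on the curve $y=2-2x+\log x$, which is strictly positive on an interval $(x_\infty,1)$ and vanishes at its endpoints. Since $\dot{x}<0$ whenever $y>0$, the density of ignorants decreases monotonically from $1$ to $x_\infty$, where absorption occurs. The equation $2-2x_\infty+\log x_\infty=0$ rearranges to $(-2x_\infty)e^{-2x_\infty}=-2e^{-2}$, which by the defining property of the Lambert $W$ function gives $x_\infty=-\tfrac{1}{2}W(-2e^{-2})$, matching \eqref{eq:xinf}.

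The principal technical obstacle is the initial condition: the fluid approximation demands $y(0)$ bounded away from $0$, whereas $Y(0)/n=1/n\to 0$. To bridge this gap I would employ a two-phase argument. In the first phase, while $X/n\approx 1$, the spreader count evolves as a nearly pure-birth random walk and can be coupled with a supercritical branching process to show that, with probability $1-O(1/n)$, $Y$ reaches a threshold $\varepsilon n$ at some random time $\sigma_\varepsilon$, while $X(\sigma_\varepsilon)/n\to 1$ in probability; on the complementary (vanishing probability) event the rumor dies out immediately and contributes nothing to the limit. In the second phase, one applies the strong Markov property at $\sigma_\varepsilon$ and invokes Kurtz's theorem with initial data $(1,\varepsilon)$; an analogous argument handles the terminal phase, where $Y$ again becomes $o(n)$ near absorption and might otherwise escape from the ODE regime. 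Taking $\varepsilon\to 0$ after $n\to\infty$, together with the continuity of the trajectory $y=2-2x+\log x$ in $\varepsilon$, yields convergence of $X(\tau^{(n)})/n$ to $x_\infty$ in probability.
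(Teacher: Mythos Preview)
The paper does not itself prove Theorem~\ref{thm:LLNKn}; it is quoted from \cite{sudbury,lebensztayn/machado/rodriguez/2011a,lebensztayn/machado/rodriguez/2011b}, and only a short description of the available methods follows the statement. Your argument via Kurtz's fluid limit for density-dependent Markov chains, the first integral $H(x,y)=y+2x-\log x$, and the identification of $x_\infty$ through the Lambert $W$ relation is correct and coincides with the approach of \cite{lebensztayn/machado/rodriguez/2011a,lebensztayn/machado/rodriguez/2011b} that the paper singles out. The one substantive methodological difference lies in how the degenerate initial condition $Y(0)/n\to 0$ is handled: the cited references perform a \emph{random time change} (reparameterizing by, essentially, the number of transitions) so that the transformed chain is density-dependent from a nondegenerate starting point and is absorbed exactly where the original process is; you instead propose a two-phase scheme, coupling with a branching process until $Y$ reaches level $\varepsilon n$, applying Kurtz from $(1,\varepsilon)$, and controlling the terminal phase separately before sending $\varepsilon\to 0$. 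Both devices are standard and valid; the random time change is somewhat cleaner because it dispenses with the $\varepsilon$-limit and the near-absorption analysis in one stroke, while your route is more explicitly probabilistic. The paper also notes that Sudbury's original proof \cite{sudbury} proceeds by martingale arguments rather than through the ODE limit, which is yet another alternative.
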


\begin{theorem}\label{thm:TCLKn} \cite{watson,lebensztayn/machado/rodriguez/2011a,lebensztayn/machado/rodriguez/2011b}
Consider the Maki-Thompson model on the complete graph with $n$ nodes, $\{(X(t),Y(t))\}_{t\geq 0}$, with $X(0)=n-1$ and $Y(0)=1$. Then
\begin{equation*}
    \sqrt{n}\left( \frac{X(\tau^{(n)})}{n}-x_\infty\right) \overset{\mathcal{D}}{\rightarrow} \mathcal{N}(0, \sigma^2)\  \text{as}\ n\to \infty,
\end{equation*}
where $\overset{\mathcal{D}}{\rightarrow}$ denotes convergence in distribution, and $\mathcal{N}(0,\sigma^2)$ is the Gaussian distribution with mean zero and variance given by
$$\sigma^2=\frac{x_\infty(1-x_\infty)(1-2x_\infty)}{(1-2x_\infty)^2}.$$
\end{theorem}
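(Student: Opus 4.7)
The plan is to reduce the two-dimensional Markov chain $(X(t),Y(t))$ to the embedded jump chain, which conveniently collapses to a one-dimensional random walk on $X$ alone, then to prove a functional CLT for the rescaled walk via Kurtz's theory of density-dependent chains, and finally to transfer the result to the stopping time $\tau^{(n)}$ through an application of the functional delta method.

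The key observation for the first step is that the total jump rate equals $Y(n-1)$, so the probability that a given jump is a spreader-to-ignorant transmission is $X/(n-1)$, independent of $Y$. Setting $I_j\in\{0,1\}$ equal to $1$ when the $j$-th jump is a transmission, one obtains $X_j=X_{j-1}-I_j$ and $Y_j=Y_{j-1}+2I_j-1$, so that $Y_m=1+2(n-1-X_m)-m$; in particular, $X(\tau^{(n)})=X_{M^*}$ where $M^*$ is the first $m$ with $m=2(n-1-X_m)+1$. Writing $\bar X_n(t):=X_{\lfloor tn\rfloor}/n$, the martingale decomposition $I_j-X_{j-1}/(n-1)$ places us in the framework of Kurtz's functional LLN and CLT: the fluid limit is $\phi(t)=e^{-t}$, the solution of $\dot\phi=-\phi$ with $\phi(0)=1$, and on any compact $[0,T]$,
\begin{equation*}
V_n(t):=\sqrt{n}\bigl(\bar X_n(t)-\phi(t)\bigr)\Rightarrow V(t),
\end{equation*}
where $V$ is the centered Gaussian diffusion determined by $dV(t)=-V(t)\,dt+\sqrt{\phi(t)(1-\phi(t))}\,dB(t)$, $V(0)=0$, whose explicit variance is $\operatorname{Var}(V(t))=e^{-2t}(e^t-1-t)$.

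For the final step, define $\Psi(f):=f(T^*(f))$, where $T^*(f)$ is the first positive solution of $T^*=2(1-f(T^*))$. Then $X(\tau^{(n)})/n=\Psi(\bar X_n)+O(1/n)$; at the fluid limit, $T^*(\phi)=u^*:=-\ln x_\infty=2(1-x_\infty)$ and $\Psi(\phi)=x_\infty$, and an implicit-function calculation using $\phi'(u^*)=-x_\infty$ yields the Hadamard derivative $D\Psi(\phi)(h)=h(u^*)/(1-2x_\infty)$. The functional delta method therefore gives
\begin{equation*}
\sqrt{n}\left(\frac{X(\tau^{(n)})}{n}-x_\infty\right)\overset{\mathcal{D}}{\rightarrow}\frac{V(u^*)}{1-2x_\infty},
\end{equation*}
a centered Gaussian with variance $\operatorname{Var}(V(u^*))/(1-2x_\infty)^2$. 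Using $\ln x_\infty=2x_\infty-2$ to simplify $e^{-2u^*}(e^{u^*}-1-u^*)$ reduces the numerator to $x_\infty(1-x_\infty)(1-2x_\infty)$, recovering the announced $\sigma^2$.

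The main obstacle is bridging the microscopic initial phase: because $Y(0)=1$, the fluid CLT cannot be applied verbatim from time zero, and one must rule out an early stochastic extinction of the rumor. The standard remedy is a comparison with a supercritical birth-and-death (or branching) process showing that, with probability $1-o(1)$, the spreader population reaches a mesoscopic level $n^{\alpha}$ for some $0<\alpha<1/2$ within $O(\log n)$ jumps, while $X/n$ still equals $1-o(n^{-1/2})$. Restarting the CLT argument from this random mesoscopic time absorbs the initial segment into an $o(n^{-1/2})$ error and preserves the Gaussian limit stated above.
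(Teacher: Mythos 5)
The paper itself does not prove this theorem: it is quoted from \cite{watson} and \cite{lebensztayn/machado/rodriguez/2011a,lebensztayn/machado/rodriguez/2011b}, and the paper only describes the cited method (a random time change making the chain density dependent, followed by Kurtz's limit theorems). Your proposal is in substance that same route, repackaged: the embedded jump chain you use is exactly the skeleton of the time-changed process, you recover the fluid limit $\phi(t)=e^{-t}$ and the Gaussian fluctuation process from the density-dependent CLT, and where the cited proofs invoke Kurtz's first-passage (boundary-crossing) theorem you instead apply a functional delta method to the crossing functional $\Psi$ — a cosmetic difference. Your computations check out: the transmission probability $X/(n-1)$ is indeed independent of $Y$, $\operatorname{Var}(V(t))=e^{-2t}(e^{t}-1-t)$, the transversality factor $1-2x_\infty$ follows from $\phi'(u^*)=-x_\infty$, and the simplification via $\ln x_\infty=2x_\infty-2$ reproduces the stated $\sigma^2$. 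You also correctly isolate the only delicate point, namely that $\phi$ satisfies the crossing equation already at $T=0$ (so the functional is not regular at $\phi$ unless early absorption is excluded), and your mesoscopic restart at level $n^{\alpha}$ with $0<\alpha<1/2$ is a standard and sufficient fix — indeed here early extinction has vanishing probability, since after the first (forced) transmission each stifling event costs a factor of order $1/n$ — which mirrors the care taken in the cited proofs.
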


Theorem \ref{thm:LLNKn} and Theorem \ref{thm:TCLKn} were initially proved by \cite{sudbury} and \cite{watson}, respectively, appealing to martingale arguments. Recently, \cite{lebensztayn/machado/rodriguez/2011a,lebensztayn/machado/rodriguez/2011b} extended these results to a wide range of rumor models. The main method of \cite{lebensztayn/machado/rodriguez/2011a,lebensztayn/machado/rodriguez/2011b} to prove these theorems is, by means of a random time change, to define a new process with the same transitions as the original, so that they terminate at the same point. This transformation is performed in such a way that the new process is a density dependent Markov chain to which we can apply well-known results of convergence for these Markov processes, see \cite{lebensztayn/machado/rodriguez/2011a,lebensztayn/machado/rodriguez/2011b,rada}. Although \cite{lebensztayn/machado/rodriguez/2011a} also stated a result about the mean number of transitions up to the end of the process, citing arguments of \cite[Theorem 2.5]{kurtz}, we point out that a more general result holds for the Maki-Thompson model on any connected finite network. 

\begin{theorem}\label{thm:timesknew} Consider the Maki-Thompson model on the network $\mathcal{G}$ with $n$ nodes, $\{\eta_{t}\}_{t\geq 0}$. Fix $v\in [n]$, and assume $\eta_0(v)=1$ and $\eta_0(v)=0$ for all $i\in [n]\setminus \{v\}$. Let $M_n$ be the number of transitions that the process makes until absorption, and, for $\mathcal{G}$, let $X(t)$, $Y(t)$, $Z(t)$ be defined as \eqref{eq:variables} and let $\tau^{(n)}$ be defined as \eqref{eq:taun}. Then,
\begin{equation}\label{eq:MR}
M_n=2 Z(\tau_n) -1.
\end{equation}
\end{theorem}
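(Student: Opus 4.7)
The plan is a simple deterministic accounting argument on sample paths; no limiting procedure, generator calculation, or martingale machinery is required. The only probabilistic ingredient is that the process reaches its absorbing set in finite time almost surely, which is immediate because the state space is finite and the set $\{\eta:Y=0\}$ is reached from any configuration with positive probability in one step per remaining spreader (using connectedness of $\mathcal{G}$ only to argue that spreaders keep appearing until every reachable ignorant is converted, which is not even needed for the identity).

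First, I would split the count $M_n$ according to the two possible transition types listed in \eqref{rates}. Let $A_n$ denote the number of transitions of type $0\to 1$ during $[0,\tau^{(n)}]$ and let $B_n$ denote the number of transitions of type $1\to 2$ during $[0,\tau^{(n)}]$, so that $M_n=A_n+B_n$. Each $0\to 1$ jump decreases $X$ by exactly one and increases $Y$ by exactly one, while each $1\to 2$ jump decreases $Y$ by one and increases $Z$ by one. Since no transition changes $X$ except the first type and none changes $Z$ except the second type, one obtains the clean identities
\begin{equation*}
X(\tau^{(n)})=X(0)-A_n=n-1-A_n, \qquad Z(\tau^{(n)})=Z(0)+B_n=B_n.
\end{equation*}

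Next, I would track the spreader count $Y$ across the whole trajectory. Starting from $Y(0)=1$ and ending at $Y(\tau^{(n)})=0$ by definition of $\tau^{(n)}$ in \eqref{eq:taun}, the net change $-1$ must equal (increments from $0\to 1$ transitions) minus (decrements from $1\to 2$ transitions), giving
\begin{equation*}
A_n - B_n = -1, \qquad \text{i.e.}\qquad B_n = A_n+1.
\end{equation*}
Combining this with the conservation law $X(\tau^{(n)})+Z(\tau^{(n)})=n$ (since $Y(\tau^{(n)})=0$), one gets $A_n=Z(\tau^{(n)})-1$ and $B_n=Z(\tau^{(n)})$.

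Finally, I would add these two expressions to obtain
\begin{equation*}
M_n = A_n + B_n = (Z(\tau^{(n)})-1) + Z(\tau^{(n)}) = 2Z(\tau^{(n)}) - 1,
\end{equation*}
which is exactly \eqref{eq:MR}. There is no real obstacle; the only thing to be careful about is that the identity is pathwise and uses nothing about the graph topology beyond finiteness of $[n]$ and the fact that the only two transition types are those listed in \eqref{rates}. In particular this explains why the result holds for arbitrary connected finite networks $\mathcal{G}$, not only for the complete graph as in the original statement of \cite{lebensztayn/machado/rodriguez/2011a}.
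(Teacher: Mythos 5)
Your proposal is correct and is essentially the same argument as the paper's: both are pathwise bookkeeping of the two transition types, with the paper attributing one step to the initial spreader and two steps to each of the other $Z(\tau^{(n)})-1$ final stiflers, while you obtain the identical counts $A_n=Z(\tau^{(n)})-1$ and $B_n=Z(\tau^{(n)})$ via the balance equation for $Y$. The only cosmetic remark is that your appeal to the conservation law $X(\tau^{(n)})+Z(\tau^{(n)})=n$ is redundant, since $Z(\tau^{(n)})=B_n$ and $B_n=A_n+1$ already give the result.
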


\begin{proof} Consider the Maki-Thompson model on $\mathcal{G}$, denoted by $\{\eta_t\}_{t\geq 0}$. We disregard the time spent in each state and focus only on the transitions, that occur at times $T_1, T_2, \ldots$ Define the discrete-time Markov chain $\{\eta_n\}_{n\geq 0}$ by setting $\eta_n := \eta_{T_n}$, representing the state of the original process immediately after the $n$-th transition. This chain is called the skeleton of the Maki-Thompson model on $\mathcal{G}$. To prove equation \eqref{eq:MR}, we observed that the number of transitions made by the Maki-Thompson model until absorption, denoted by $M_n$, corresponds to the absorption time of its skeleton. Note that at the end of the process, there are $Z(\tau_n)$ stiflers. At each step of the skeleton, there are only two possible outcomes: either a new spreader is added, or a spreader is lost by becoming a stifler. Among the final stiflers, one originates from the initial spreader, node $v$. The remaining $Z(\tau_n) - 1$ stiflers are individuals who were initially ignorant, became spreaders during the process, and later turned into stiflers. The first stifler corresponds to one step of the process, whereas each of the remaining stiflers involves two steps: one for becoming a spreader and the other for becoming a stifler. Therefore, the total number of transitions in the process is $M_n=2(Z(\tau_n)-1)+1$ which proves the result.  
\end{proof}

\begin{corollary}\label{thm:timeskn} \cite{kurtz, lebensztayn/machado/rodriguez/2011a} Consider the Maki-Thompson model on the complete graph with $n$ nodes, $\{(X(t),Y(t))\}_{t\geq 0}$, with $X(0)=n-1$ and $Y(0)=1$. Let $m^{(n)}$ be the mean number of transitions that the process undergoes until absorption. Then, 
\begin{equation*}
    \lim_{n\to \infty} \frac{m^{(n)}}{n}=\tau_\infty,
\end{equation*}
where $\tau_\infty=2(1-x_\infty)$. 
\end{corollary}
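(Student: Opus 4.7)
The plan is to combine the identity provided by Theorem \ref{thm:timesknew} with the law of large numbers for the final proportion of ignorants, Theorem \ref{thm:LLNKn}, and then pass from convergence in probability to convergence of means via bounded convergence.

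First I would specialize Theorem \ref{thm:timesknew} to the complete graph $K_n$ starting from a single spreader. This gives the pathwise identity
\begin{equation*}
M_n = 2\, Z(\tau^{(n)}) - 1.
\end{equation*}
Taking expectations yields $m^{(n)} = 2\, \mathbb{E}[Z(\tau^{(n)})] - 1$, so the corollary reduces to proving
\begin{equation*}
\lim_{n\to\infty}\frac{\mathbb{E}[Z(\tau^{(n)})]}{n} = 1 - x_\infty.
\end{equation*}

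Next I would use the conservation law $X(t) + Y(t) + Z(t) = n$ at $t = \tau^{(n)}$, where $Y(\tau^{(n)}) = 0$ by the definition of the absorption time in \eqref{eq:taun}. Hence $Z(\tau^{(n)})/n = 1 - X(\tau^{(n)})/n$, and Theorem \ref{thm:LLNKn} gives
\begin{equation*}
\frac{Z(\tau^{(n)})}{n} \xrightarrow{\;\mathbb{P}\;} 1 - x_\infty \quad \text{as } n \to \infty.
\end{equation*}

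Since $0 \leq Z(\tau^{(n)})/n \leq 1$ for every $n$, the sequence is uniformly bounded, so the bounded convergence theorem upgrades convergence in probability to convergence of expectations, giving $\mathbb{E}[Z(\tau^{(n)})]/n \to 1 - x_\infty$. Substituting back,
\begin{equation*}
\frac{m^{(n)}}{n} = \frac{2\,\mathbb{E}[Z(\tau^{(n)})] - 1}{n} \longrightarrow 2(1-x_\infty) = \tau_\infty,
\end{equation*}
as claimed. There is no real obstacle here: the only substantive ingredients are the deterministic identity from Theorem \ref{thm:timesknew}, which applies verbatim since $K_n$ is a connected finite network, and the in-probability limit from Theorem \ref{thm:LLNKn}; the passage to means is immediate from boundedness, so no uniform-integrability work is needed. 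If one preferred to avoid Theorem \ref{thm:timesknew} entirely, the same conclusion could alternatively be extracted by applying \cite[Theorem 2.5]{kurtz} to the density-dependent chain used in \cite{lebensztayn/machado/rodriguez/2011a}, but the route through the skeleton identity is the shortest.
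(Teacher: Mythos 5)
Your proposal is correct and follows essentially the same route as the paper: the identity $M_n = 2Z(\tau^{(n)})-1$ from Theorem \ref{thm:timesknew}, the law of large numbers of Theorem \ref{thm:LLNKn} applied through $Z(\tau^{(n)}) = n - X(\tau^{(n)})$, and dominated (bounded) convergence to pass to expectations. The paper's proof is just a one-line citation of these same three ingredients, so your write-up simply fills in the details it leaves implicit.
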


\begin{proof}
    This is a direct consequence of Theorem \ref{thm:LLNKn}, Theorem \ref{thm:timeskn} and Lebesgue dominated convergence \cite[Corollary 6.3.2]{resnick}.
\end{proof}

The proof of Corollary \ref{thm:timeskn} suggested by \cite{lebensztayn/machado/rodriguez/2011a} follows the same ideas as the proof of \cite[Theorem 2.5]{kurtz}. Although the latter is a model of interacting random walks on a complete graph, \cite{EP} showed that it is related to the rumor model studied by \cite{lebensztayn/machado/rodriguez/2011a}, which in turn extends the Maki-Thompson rumor model. Our proof of Theorem \ref{thm:timesknew} is intuitive and direct, and generalizes the result stated without proof by \cite{EAP} for the Newman-Watts small-world variant.   

\subsection{The cycle graph} Consider the case $k=1$, which corresponds to the cycle graph, and the Maki-Thompson rumor model is defined on this graph. It is not difficult to intuitively understand that the rumor will be confined to a neighborhood of nodes whose number is independent of $n$. This is because, for $n$ sufficiently large, the number of informed nodes in any direction is bounded from above for a random variable with a geometric distribution with parameter $1/2$. To formalize this idea, we begin by considering the Maki-Thompson model on $\mathbb{Z}$ maintaining the main notation stated in the previous section. That is, consider the Markov process $\{\eta_{t}\}_{t\geq 0}$ with the state space $\mathcal{S}=\{0,1,2\}^{\mathbb{Z}}$ and transition rates given by \eqref{rates}. In addition, we define the absorption time of the process as $\tau^{\infty}:=\inf\{t\geq 0: Y(t)=0\}$, where
$$Y(t)=\sum_{i\in \mathbb{Z}}\mathbb{I}_{\{\eta_t(i)=1\}},$$
for any $t\geq 0$. Consider also the random variables given by:
$$X(t)=\sum_{i\in \mathbb{Z}}\mathbb{I}_{\{\eta_t(i)=0\}} \text{ and }Z(t)=\sum_{i\in \mathbb{Z}}\mathbb{I}_{\{\eta_t(i)=2\}}.$$

\begin{prop}\label{prop:MTZ}
    Consider the Maki-Thompson model on $\mathbb{Z}$, $\{\eta_{t}\}_{t\geq 0}$, with $\eta_0(0)=1$ and $\eta_0(i)=0$ for all $i\in \mathbb{Z}\setminus \{0\}$. Let $R\sim Geom(1/2)$, $L\sim  Geom(1/2)$ and let $Y$ be a discrete random variable with $\mathbb{P}(Y=-1)=\mathbb{P}(Y=1)=1/4=1-P(Y=2)$. Then,
    \begin{equation}\label{eq:MTZ}
        Z(\tau^{\infty})=1 + R\, \mathbb{I}_{\{Y\neq -1\}} + L\, \mathbb{I}_{\{Y\neq 1\}}.
    \end{equation}
Moreover, $E(Z(\tau^{\infty}))=4.$
\end{prop}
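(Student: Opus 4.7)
The plan is to decompose $Z(\tau^{\infty})$ into three independent ingredients: the initial spreader at node $0$ (which will contribute the $+1$), a random variable $Y$ describing which sides of the lattice are ever reached by the rumor, and two independent geometric lengths of the right and left one-sided propagations. The starting observation is that node $0$ must eventually become a stifler, since it begins as a spreader, can never revert to ignorant, and the process is absorbed at $Y(\tau^{\infty})=0$.

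Next I would analyze the first transition and the fate of node $0$. In the initial configuration only node $0$ can act, spreading to $-1$ and to $1$ each at rate $1$ and having stifling rate $0$; hence, by symmetry, the first event is $0\to 1$ or $0\to -1$ with probability $1/2$ each. Assume without loss of generality that it is $0\to 1$. Node $0$ now faces two competing independent rate-$1$ exponential clocks: one to spread to $-1$ and one to become a stifler (its neighbor $1$ being non-ignorant contributes rate exactly $1$ to stifling, and this remains true regardless of whether $1$ later becomes a stifler). By the standard property of competing exponentials, each outcome has probability $1/2$, and these races depend only on node-$0$ clocks. Setting $Y=1$ if only the right side is ever informed, $Y=-1$ if only the left, and $Y=2$ if both sides are informed, this gives $\mathbb{P}(Y=1)=\mathbb{P}(Y=-1)=1/4$ and $\mathbb{P}(Y=2)=1/2$.

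The main step --- and the one needing most care --- is identifying the law of each one-sided chain. I would claim that conditional on the right side being informed, the number $R$ of positions $\ge 1$ that ever become spreaders is $\mathrm{Geom}(1/2)$, independent of the left-side dynamics and of $Y$. The key observation is that at every instant the rightmost spreader $r\ge 1$ has $r-1$ non-ignorant (because $r$ was infected by $r-1$, which can never return to ignorant) and $r+1$ ignorant (because no spreader has ever existed to the right of $r$). Therefore $r$ has spreading rate $1$ and stifling rate $1$, irrespective of all other features of the configuration, and these are the only transitions that can change the right front: if $r$ spreads first the front advances, while if $r$ becomes a stifler first then $r+1$ is permanently cut off from any spreader and the right chain is frozen. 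Consequently $R$ is the first-failure epoch of a sequence of independent fair coin flips, so $\mathbb{P}(R=k)=2^{-k}$ for $k\ge 1$ and $\mathbb{E}[R]=2$; symmetrically $L\sim \mathrm{Geom}(1/2)$, independent of $R$. Since every former spreader must be a stifler at time $\tau^{\infty}$, summing the contributions yields \eqref{eq:MTZ}, and the expectation follows from $\mathbb{E}[Z(\tau^{\infty})]=1+\mathbb{E}[R]\,\mathbb{P}(Y\ne -1)+\mathbb{E}[L]\,\mathbb{P}(Y\ne 1)=1+2\cdot\tfrac{3}{4}+2\cdot\tfrac{3}{4}=4$.
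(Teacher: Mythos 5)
Your proof is correct and follows essentially the same decomposition as the paper's: $Z(\tau^{\infty})=1+R\,\mathbb{I}_{\{\text{right side informed}\}}+L\,\mathbb{I}_{\{\text{left side informed}\}}$ with the one-sided extents distributed as $Geom(1/2)$; you simply supply, via the rightmost-front race between two rate-one clocks, the justification the paper leaves as ``not difficult to see.'' As a minor point in your favor, your explicit value $\mathbb{P}(Y=2)=1/2$ is the consistent one (the statement's ``$1/4=1-P(Y=2)$'' is a typo) and is exactly what the paper's expectation computation $1+\tfrac{3}{4}\,\mathbb{E}(R)+\tfrac{3}{4}\,\mathbb{E}(L)=4$ implicitly uses.
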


\begin{proof}We denote by $\mathbb{Z}^{+}$ ($\mathbb{Z}^{-}$) the set of positive (negative) integers. Let $R:=\sup \{i\in \mathbb{Z}^{+}\cup\{0\}:\eta_{t}(i)\neq 0\text{ for some }t\geq 0\}$ and $L:=\inf \{i\in \mathbb{Z}^{-}\cup\{0\}:\eta_{t}(i)\neq 0\text{ for some }t\geq 0\}$. That is, $R$ denotes the rightmost informed node during the process and $L$ denotes the leftmost informed node during the process. It is not difficult to see that given that $\eta_{t}(1)\neq 0$ for some $t>0$, $R\sim Geom(1/2)$. Similarly, given that $\eta_{t}(-1)\neq 0$ for some $t>0$, $L\sim Geom(1/2)$. Therefore, $Z(\tau^{\infty})$ is equal to 
    $$ 1 + R\, \mathbb{I}_{\displaystyle\left\{\displaystyle\cup_{t\in(0,\infty)}\{\eta_t(1)\neq 0\}\right\}} +  L\, \mathbb{I}_{\displaystyle\left\{\displaystyle\cup_{t\in(0,\infty)}\{\eta_t(-1)\neq 0\}\right\}},$$
    and \eqref{eq:MTZ} is obtained by letting  $Y$ equals to:
    $$\left\{
    \begin{array}{rl}
         1, &\text{ if }\displaystyle\bigcup_{t\in(0,\infty)}\{\eta_t(1)\neq 0\}, \text{ and } \displaystyle\bigcap_{t\in(0,\infty)}\{\eta_t(-1)= 0\}\\[.2cm]
         -1, &\text{ if }\displaystyle\bigcup_{t\in(0,\infty)}\{\eta_t(-1)\neq 0\}, \text{ and } \displaystyle\bigcap_{t\in(0,\infty)}\{\eta_t(1)= 0\}\\[.2cm]
         2,& \text{ if }\displaystyle\bigcup_{t\in(0,\infty)}\{\eta_t(1)\neq 0\}, \text{ and } \displaystyle\bigcup_{t\in(0,\infty)}\{\eta_t(-1)\neq 0\}.
    \end{array}\right.
    $$
Moreover,
    $ E(Z(\tau^{\infty})) =  1 + \mathbb{E}(R\, \mathbb{I}_{\{Y\neq -1\}}) + \mathbb{E}(L\, \mathbb{I}_{\{Y\neq 1\}})= 1+ (3/4)\,\mathbb{E}(R) + (3/4)\,\mathbb{E}(L)= 4.$
\end{proof}

Proposition \ref{prop:MTZ} is of particular interest when one considers the behavior of the Maki-Thompson rumor model in the cycle graph with $n$ nodes, which is asymptotically the same as in $\mathbb{Z}$.

\begin{theorem}\label{thm:MTk1}
    Consider the Maki-Thompson model in a cycle graph with $n$ nodes, $\{\eta_{t}\}_{t\geq 0}$, with $\eta_0(1)=1$ and $\eta_0(i)=0$ for all $i\in [n]\setminus \{1\}$. Let $R\sim Geom(1/2)$, $L\sim  Geom(1/2)$ and $Y$ be discrete random variables with $\mathbb{P}(Y=-1)=\mathbb{P}(Y=1)=1/4=1-P(Y=2)$. Then,
    \begin{equation}\label{eq:limTzRK}
        \lim_{n\to \infty} Z(\tau^{(n))})=1 + R\, \mathbb{I}_{\{Y=1\}} + L\, \mathbb{I}_{\{Y=-1\}} \ \ \text{almost surely}.
    \end{equation}
Moreover, $\displaystyle\lim_{n\to \infty} E(Z(\tau^{(n))}))= 4.$
\end{theorem}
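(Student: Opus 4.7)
The plan is to couple the rumor process $\{\eta^{(n)}_t\}$ on the cycle $C_n$ with the process $\{\eta^\infty_t\}$ on $\mathbb{Z}$ from Proposition \ref{prop:MTZ}, so that the two dynamics coincide until the informed region would wrap around the cycle. Identifying the vertex $1$ of $C_n$ with $0\in\mathbb{Z}$, I would use a common graphical construction: assign to each ordered edge of $\mathbb{Z}$ an independent rate-one Poisson process, and use the same Poisson clocks for the corresponding edges of $C_n$ under the identification $j\mapsto (1+j)\bmod n$. These clocks drive both the spreader-to-ignorant and spreader-to-stifler transitions in the standard way, producing $\eta^{(n)}$ and $\eta^\infty$ on a single probability space with the required marginal laws.

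I would then argue that as long as the set of ever-informed sites of $\eta^\infty$ is contained in an arc of length strictly smaller than $n$, every vertex has the same pair of neighbors in the same states under both dynamics. Hence the rates $n_1$ and $n_1+n_2$ agree at every site, the two chains make identical transitions, and in particular $Z(\tau^{(n)})=Z(\tau^\infty)$. Denoting this good event by $A_n:=\{R+L+1\le n\}$, with $R,L$ the a.s.\ finite random variables from Proposition \ref{prop:MTZ}, the fact that $R+L<\infty$ almost surely implies that for almost every $\omega$ there exists $n_0(\omega)$ such that $\omega\in A_n$ for all $n\ge n_0(\omega)$. This yields $Z(\tau^{(n)})\to Z(\tau^\infty)$ almost surely, whose law is given by \eqref{eq:MTZ}.

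For the convergence of means, since $R,L\sim\mathrm{Geom}(1/2)$ have exponential tails, $\mathbb{P}(A_n^c)$ decays exponentially in $n$. I would split
\begin{equation*}
\mathbb{E}[Z(\tau^{(n)})]=\mathbb{E}[Z(\tau^\infty)\mathbb{I}_{A_n}]+\mathbb{E}[Z(\tau^{(n)})\mathbb{I}_{A_n^c}],
\end{equation*}
observing that the first term converges to $\mathbb{E}[Z(\tau^\infty)]=4$ by dominated convergence (using $Z(\tau^\infty)\le 1+R+L\in L^1$), while the second is bounded by $n\,\mathbb{P}(A_n^c)\to 0$ since trivially $Z(\tau^{(n)})\le n$. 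Combining these estimates yields $\lim_{n\to\infty}\mathbb{E}[Z(\tau^{(n)})]=4$.

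The main obstacle I expect is making the coupling argument entirely airtight, in particular verifying that the spreader-to-stifler transition (whose total rate is $n_1+n_2$ from edges incident to a spreader, irrespective of whether each such neighbor is a spreader or a stifler) is consistently realized by the same Poisson clocks in both graphs on the event $A_n$. This amounts to using edge-based graphical clocks where a ring of the clock at an edge $\{u,v\}$ with $u$ a spreader and $v$ non-ignorant produces a transition at $u$; given this setup, vertex neighborhoods on $C_n$ and $\mathbb{Z}$ coincide on $A_n$, and the comparison becomes essentially automatic.
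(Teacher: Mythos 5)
Your proposal is correct and follows essentially the same route as the paper: a coupling of the cycle process with the Maki--Thompson model on $\mathbb{Z}$, so that the two agree until the informed region would wrap around, after which Proposition \ref{prop:MTZ} gives the almost sure limit (note the limit law is the one in \eqref{eq:MTZ}, with indicators $\mathbb{I}_{\{Y\neq -1\}}$, $\mathbb{I}_{\{Y\neq 1\}}$, consistent with the mean $4$). Your explicit graphical construction and the split $\mathbb{E}[Z(\tau^{(n)})]=\mathbb{E}[Z(\tau^{\infty})\mathbb{I}_{A_n}]+\mathbb{E}[Z(\tau^{(n)})\mathbb{I}_{A_n^c}]$ with the bound $Z(\tau^{(n)})\le n$ and exponentially small $\mathbb{P}(A_n^c)$ make rigorous the convergence of expectations, which the paper handles only implicitly via the truncated variables $R_n,L_n$.
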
 

\begin{proof} The proof is by coupling, for any $n$, of the Maki-Thompson model in the cycle graph with $n$ nodes, $\{\eta_{t}^{(n)}\}_{t\geq 0}$, and the Maki-Thompson model on $\mathbb{Z}$. Let $R_n:=\max \{i\in \{2,3,\ldots \lceil n/2 \rceil\}:\eta_{t}^{(n)}(i)\neq 0\text{ for some }t\geq 0\}$ and $L_n:=\min \{i\in \{-\lfloor n/2 \rfloor, \ldots, -2,-1\}:\eta_{t}^{n}(i)\neq 0\text{ for some }t\geq 0\}$. Then
        \begin{equation}
            R_n =\left\{
            \begin{array}{cl}
               R,  & \text{ if }R < \lceil n/2 \rceil \\
               \lceil n/2 \rceil, & \text{ if }R \geq \lceil n/2 \rceil.
            \end{array}\right.
        \end{equation}
        and
            \begin{equation}
            L_n =\left\{
            \begin{array}{cl}
               L,  & \text{ if }L > - \lfloor n/2 \rfloor \\
               - \lfloor n/2 \rfloor, & \text{ if }L \leq -\lfloor n/2 \rfloor.
            \end{array}\right.
        \end{equation}
        
        Therefore, the result is a consequence of Proposition \ref{prop:MTZ} and the fact that, by construction, $R_n \rightarrow R$ and $L_n \rightarrow L$ almost surely.
\end{proof}

Now, we can compare with the result for the complete graph, that is, with Theorem \ref{thm:LLNKn}.

\begin{corollary}\label{cor:xinfRK}
      Consider the Maki-Thompson model in a cycle graph with $n$ nodes, $\{\eta_{t}\}_{t\geq 0}$, with $\eta_0(1)=1$ and $\eta_0(i)=0$ for all $i\in [n]\setminus \{1\}$. Let $X(t)$ be defined as \eqref{eq:variables} and let $\tau^{(n)}$ be defined as \eqref{eq:taun}. Then,
\begin{equation*}
    \lim_{n \to \infty}\frac{X(\tau^{(n)})}{n}=1 \ \ \text{almost surely.}
\end{equation*}
\end{corollary}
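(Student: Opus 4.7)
The plan is to reduce the corollary directly to Theorem \ref{thm:MTk1} via the conservation law. The starting point is the identity $X(t)+Y(t)+Z(t)=n$, valid for every $t\geq 0$ on the ring lattice with $n$ nodes. Evaluating it at $t=\tau^{(n)}$ and using $Y(\tau^{(n)})=0$ by the very definition of the absorption time in \eqref{eq:taun}, one obtains $X(\tau^{(n)})=n-Z(\tau^{(n)})$, and therefore
\begin{equation*}
\frac{X(\tau^{(n)})}{n}=1-\frac{Z(\tau^{(n)})}{n}.
\end{equation*}

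Next I would invoke Theorem \ref{thm:MTk1}, which has already done the coupling with the Maki--Thompson model on $\mathbb{Z}$ and asserts that $Z(\tau^{(n)})$ converges almost surely, as $n\to\infty$, to a random variable of the form $1+R\,\mathbb{I}_{\{Y=1\}}+L\,\mathbb{I}_{\{Y=-1\}}$, where $R$ and $L$ are geometric with parameter $1/2$. In particular, the limiting random variable is almost surely finite. On every sample path along which the convergence holds, the sequence $Z(\tau^{(n)})$ is eventually bounded, so $Z(\tau^{(n)})/n\to 0$ on that path. Since this event has probability one, $Z(\tau^{(n)})/n\to 0$ almost surely, and combining with the displayed identity yields $X(\tau^{(n)})/n\to 1$ almost surely.

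Because the heavy lifting---identifying the asymptotic law of the final number of stiflers on the cycle through the coupling with $\mathbb{Z}$---is packaged inside Theorem \ref{thm:MTk1}, there is no genuine obstacle in the proof of the corollary itself; it is essentially a one-line consequence of conservation of mass together with the fact that the a.s.\ limit of $Z(\tau^{(n)})$ is finite. The role of the statement is conceptual: it provides the explicit counterpoint to Theorem \ref{thm:LLNKn} for the complete graph, where the limit of $X(\tau^{(n)})/n$ is $x_\infty\approx 0.203$, and thus delimits the two extremes between which the optimal choice $k=k(n)$ studied in the rest of the paper must interpolate.
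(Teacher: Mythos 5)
Your proposal is correct and is essentially the paper's own proof: the paper derives the corollary from the identity $X(\tau^{(n)})=n-Z(\tau^{(n)})$ together with the almost sure convergence of $Z(\tau^{(n)})$ to a finite random variable established in Theorem \ref{thm:MTk1}. You merely spell out the (correct) intermediate observation that a.s.\ finiteness of the limit forces $Z(\tau^{(n)})/n\to 0$ almost surely.
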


\begin{proof}
    By \eqref{eq:limTzRK} and the fact that $X(\tau^{(n)})=n-Z(\tau^{(n)})$.
\end{proof}

{\color{black}
Modelling rumor spreading on networks as an interacting particle system is recent. In this section, we explore the coupling technique between the original stochastic process on the finite cycle graph and the same process defined on the line. This is a constructive approach that can be adapted to the study of other contagion processes on ring-like lattices.}

\section{Numerical results for the general case}

From the results summarized in the previous section, we conclude that if $X(t)$ and $\tau^{(n)}$ are defined for the $2k$-regular ring lattice with $n$ nodes by \ref{eq:variables} and \eqref{eq:taun}, respectively, then
$$\lim_{n\to \infty}\frac{X(\tau^{(n)})}{n}\,
\left\{
\begin{array}{clll}
= & 1& \text{ a.s. }, & \text{ if }k=1;\\[.2cm]
 \approx & 0.203188 & \text{ a.s. }, & \text{ if }k=\lfloor n/2\rfloor.
\end{array}
\right.
$$

A natural question that arises is whether there is a critical $k$ as a function of $n$ above which the remaining proportion of ignorants is $\approx 0.203$. The purpose of this section is to answer this question. 

\subsection{Simulations methodology}
In this section we present numerical results obtained by simulating the Maki-Thompson model on a $2k$-regular ring lattice with $n$ nodes. We are interested in measuring the remaining proportion of ignorants, as $k$ varies with $n$. Thus, to emphasize the dependence of the parameters, we denote the average final number of ignorants as $X(n,k)$. Analogously, denoted by $Z(n,k)$ the average final number of stiflers and let $x(n,k):=X(n,k)/n$ and $z(n,k):=Z(n,k)/n$ are the remaining average proportion of ignorants and stiflers, respectively. Before presenting the numerical results, let us provide a brief commentary on the methodology employed in the simulations. In order to implement the Maki-Thompson process on the ring lattice we initially extract randomly a node, say $v$, which will play as the original spreader, while the remaining $n-1$ nodes are ignorant; as a consequence we create the sets of spreaders, ignorants and stiflers, respectively, as $\mathcal{Y}(0) = \{v\}$, $\mathcal{X}(0) = [n]\setminus\{v\}$ and $\mathcal{Z}(0)=\emptyset$. Then, in the next time steps $t = 1, 2, \ldots$, we sequentially repeat the following rules. Consider the set of spreaders $\mathcal{Y}(t-1)$, and randomly extract a node $u \in \mathcal{Y}(t-1)$. From the set of neighbors, $\mathcal{N}(u)$ of $u$, randomly extracts node $\omega$. If the neighbor $\omega$ is ignorant, then $\omega$ turns into a spreader and, accordingly, $\mathcal{Y}(t + 1) = \mathcal{Y}(t) \cup \{\omega\}$ while $\mathcal{X}(t + 1) = \mathcal{X}(t) \setminus \{\omega\}$, otherwise, if the neighbor $\omega$ is either spreader or stifler, then $u$ turns into a stifler and, accordingly, $\mathcal{Y}(t + 1) = \mathcal{Y}(t) \setminus \{u\}$ while $\mathcal{Z}(t + 1) = \mathcal{Z}(t) \cup \{u\}$. This procedure is repeated as long as there are spreaders among individuals, that is, as long as the cardinality of the set of spreaders is greater than zero. The final time is designated as $t(n,k)$, signifying that the number of spreaders is zero at all times greater than $t(n,k)$. Conversely, for all $t < t(n,k)$, the number of ignorants represented by $|\mathcal{X}(t)|$, decreases over time. Upon completion of the simulation, the final value of the number of ignorants, denoted by $|\mathcal{X}(t(n,k))|$, was recorded. Subsequently, the initial setting was reinstated and the simulation was repeated $M$ times. For each repetition $i\in\{1,2,\ldots,M\}$ an estimate of $|\mathcal{X}(t(n,k))|^{(i)}$ is obtained, and the empirical average, denoted by $X(n,k)$, is calculated. That is $X(n,k):=\sum_{i=1}^M |\mathcal{X}(t(n,k))|^{(i)}$. we are interested  in the empirical average of the remaining proportion of ignorants denoted by $x(n,k):=X(n,k)/n$. In addition, let $m(n,k):= \sum_{i=1}^M t(n,k)^{(i)}$ be the average number of transitions up to the end of the process. Note that we should expect $x(n,\lfloor n/2 \rfloor)\approx x_\infty$ and $m(n,\lfloor n/2 \rfloor) \approx n\tau_{\infty}$, where $x_\infty$ and $\tau_{\infty}$ are given by Theorem \ref{thm:LLNKn} and Corollary \ref{thm:timeskn}, respectively. To select the value of $M$, we fix $n=1000$ and obtain a graph of the final proportion of ignorant individuals with a varying number of simulations. We then obtained the average of these simulations and present the results in Figure \ref{fig:Var_Repeticiones}. Because this proportion is not dependent on the number of simulations, we set $M=1000$ to reduce the computational time of the simulations.

\begin{figure*}[htpb]
    \centering
    \subfigure[Average remaining proportion of ignorants.]{\includegraphics[width=0.45\linewidth]{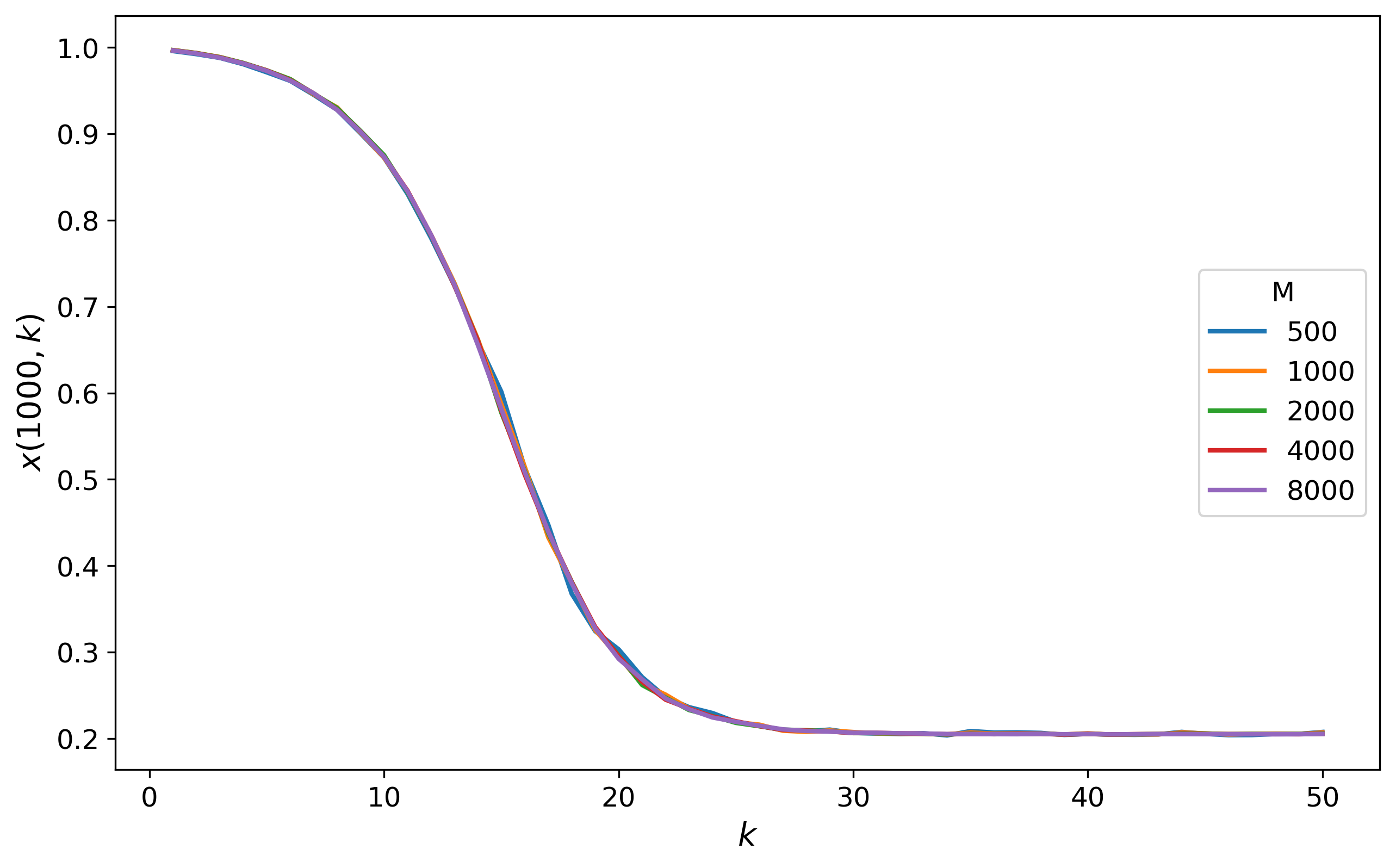}}
    \subfigure[Average number of transitions up to the end of the process.]{\includegraphics[width=0.45\linewidth]{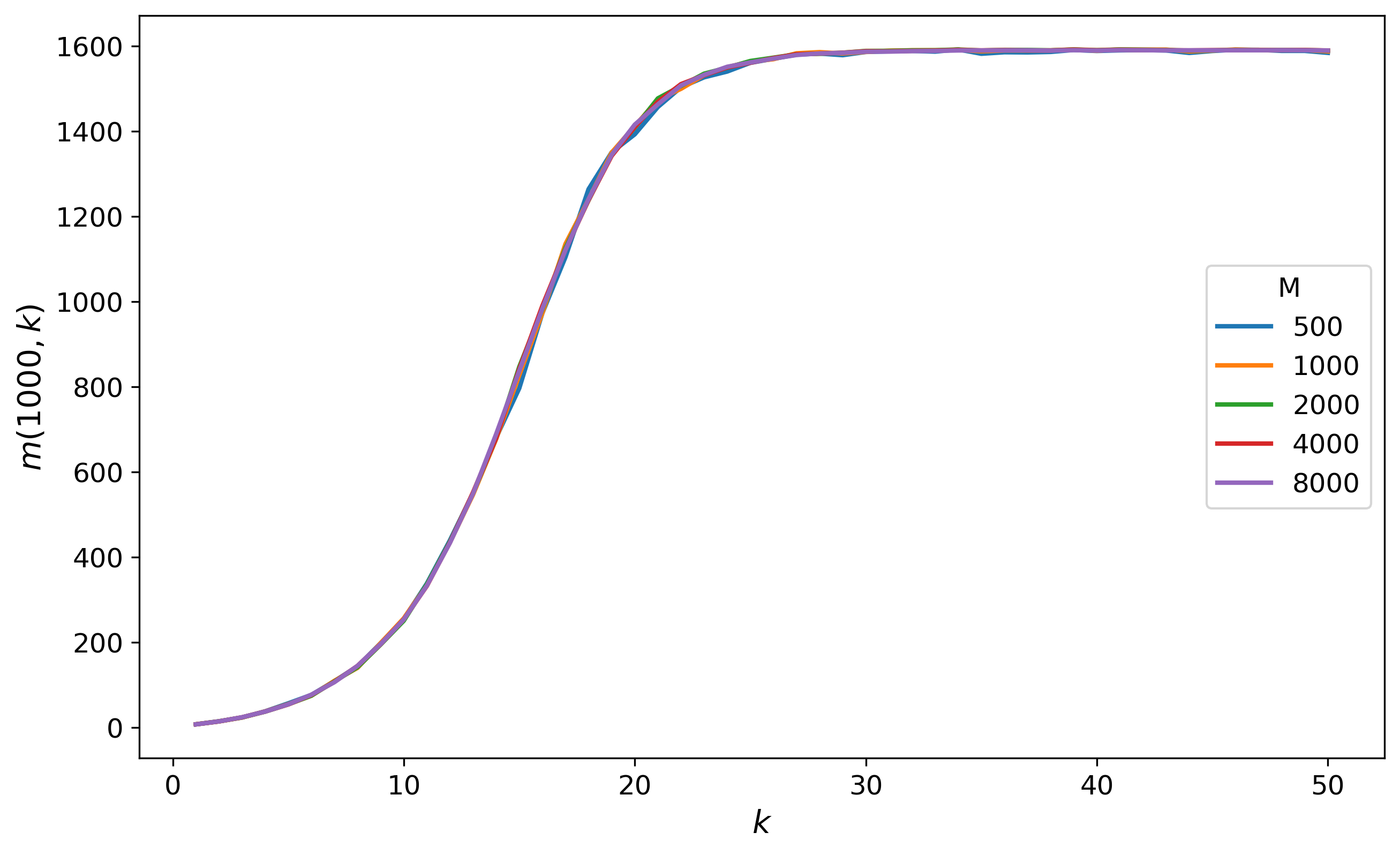}}
    \caption{Behavior of the average limiting quantities of the Maki-Thompson rumor model in the ring lattice according to $k$ for $n=1000$ and varying the number of simulations.}
    \label{fig:Var_Repeticiones}
\end{figure*}

\subsection{The size of the rumor and the duration of the process according to $k=k(n)$} 

To analyze the behavior of the quantities of interest, we performed a series of simulations in which the number of nodes varied between 500 and 32,000. The results of these simulations, which were repeated 1000 times, are shown in Figure \ref{fig:Var_Nodos}.

\begin{figure*}[htpb]
    \centering
    \subfigure[Average remaining proportion of ignorants.]{\includegraphics[width=0.45\linewidth]{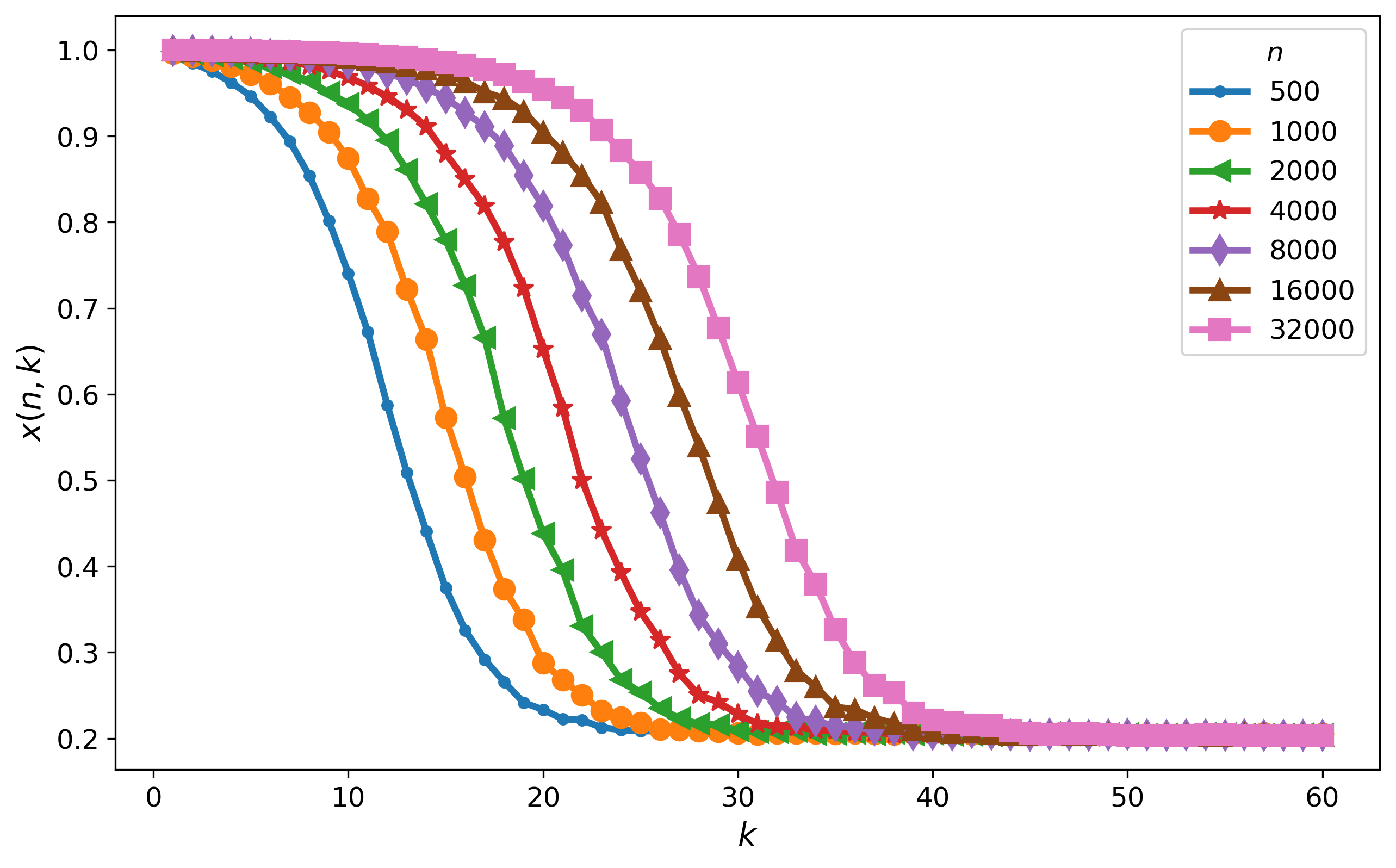}}
    \subfigure[Average number of transitions up to the end of the process.]{\includegraphics[width=0.46\linewidth]{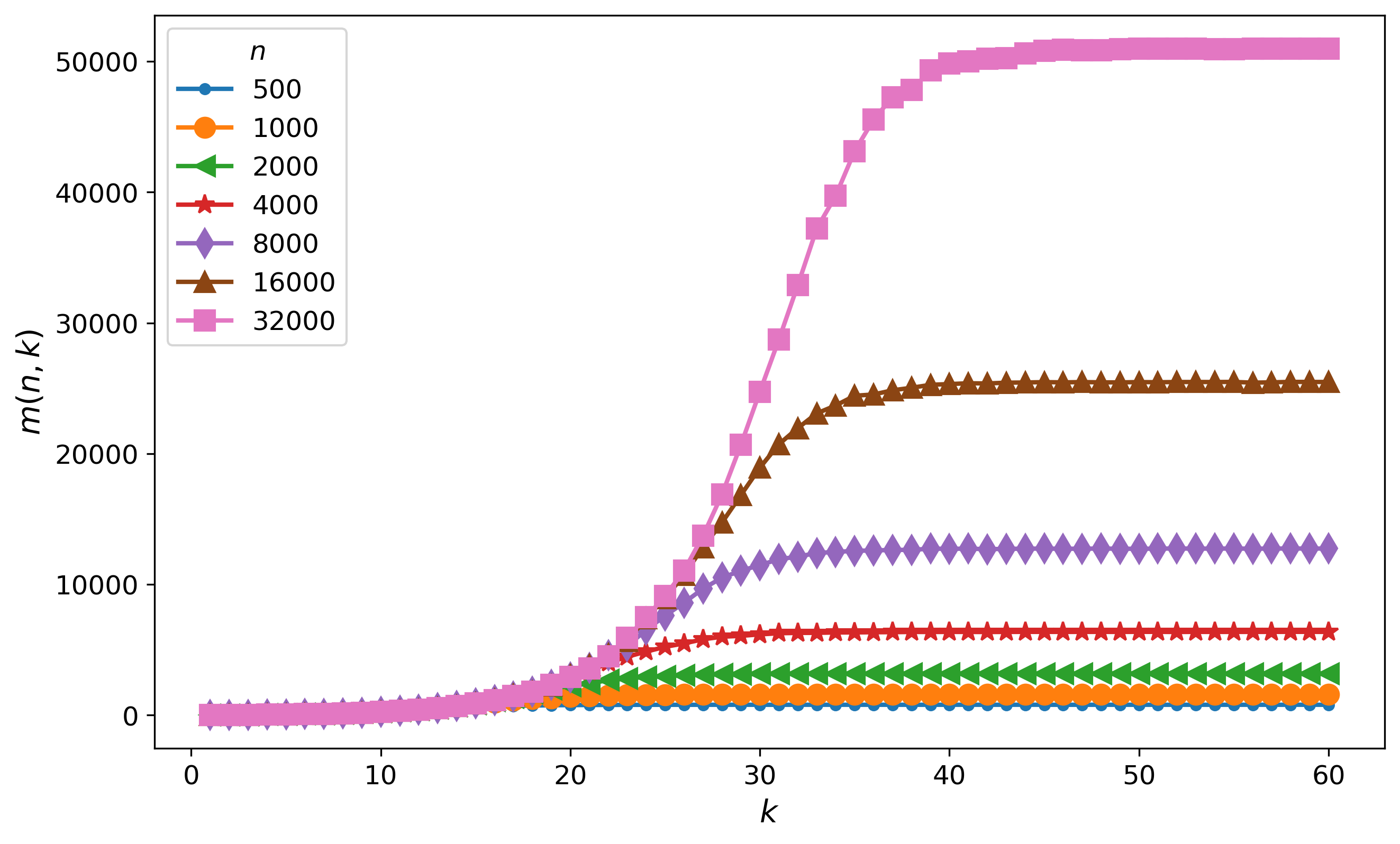}}
       \caption{Behavior of the average limiting quantities of the Maki-Thompson rumor model in the ring lattice according to $k$ for $1000$ simulations and varying the number of nodes. The average proportion of ignorants $x(n,k)$, see (a), and the average number of transitions up to the end of the process $m(n,k)$, see (b), are shown versus $k$ for different values of $n$. Notice that curves for $x(n,k)$ pertaining to different values of $n$ collapse after a certain value of $k$, whereas a similar collapse is observed for all the curves of $m(n,k)$ before a certain value of $k$.}
    \label{fig:Var_Nodos}
\end{figure*}

Furthermore, as illustrated in Figure \ref{fig:Var_Nodos}, the average final proportion of ignorants tends to approach the limiting value for complete graphs, which is $0.203187$, as the value of $k$ increases for each $n$. To substantiate the aforementioned assertion, Table \ref{tab:PropIgn} presents the average proportion derived from the simulation of $1000$ repetitions and varying values of $k$ for each $n$. Thus, it can be concluded that in most cases, the error is accurate to three digits and within a tolerance of $10^{-3}$. To improve this accuracy for a fixed $n$, it is necessary to increase $k$. However, it should be noted that the behavior is asymptotic for the complete graph, which means that while increasing $k$ improves the accuracy, it also causes the computational time to increase exponentially. Therefore, the current accuracy was sufficient.

\begin{table*}[htpb]
\centering
\begin{tabular}{c|llllllll}
$k \backslash n$ & 500      & 1000 & 2000 & 4000 & 8000 & 16000 & 32000 \\\hline
25 & 0.208348 & 0.218133 & 0.253501 & 0.347087 & 0.524593 & 0.720238 & 0.857631 \\
30 & 0.206684 & 0.206166 & 0.209263 & 0.228014 & 0.283021 & 0.408521 & 0.613475 \\
35 & 0.205532 & 0.205462 & 0.205256 & 0.209110 & 0.214324 & 0.236641 & 0.325922 \\
40 & 0.205632 & 0.205873 & 0.205219 & 0.204223 & 0.205050 & 0.208674 & 0.221189 \\
45 & 0.205602 & 0.204440 & 0.204785 & 0.203887 & 0.203780 & 0.204028 & 0.206267 \\
50 & 0.206096 & 0.204476 & 0.204559 & 0.203550 & 0.204891 & 0.203952 & 0.203648 \\
55 & 0.207360 & 0.204694 & 0.204927 & 0.204141 & 0.203557 & 0.203543 & 0.204269 \\
60 & 0.207482 & 0.205305 & 0.204097 & 0.204256 & 0.203630 & 0.203714 & 0.203667 \\
\end{tabular}
\caption{Average remaining proportion of ignorants $x(n,k)$.}
    \label{tab:PropIgn}
\end{table*}

Note that by fixing $k=1$ and varying the number of nodes, it is numerically verified in Table \ref{tab:Ring} that the final average number of stiflers in the cycle graph is approximately $4$, and the average remaining proportion of ignorants is approximately $1$, verifying the results of Theorem \ref{thm:MTk1} and Corollary \ref{cor:xinfRK}, respectively.

\begin{table*} [htpb]
    \centering
    \begin{tabular}{clllllll}
    \hline
       $n$      & 500 & 1000 & 2000 & 4000 & 8000 & 16000 & 32000 \\ \hline
       $Z(n,1)$ & 3.989 & 4.065 & 4.134 & 3.984 & 3.992 & 3.856 & 4.128 \\ \hline
          $x(n,1)$ & 0.992022 & 0.995935 & 0.997933 & 0.999004 & 0.999501 & 0.999759 & 0.999871 \\ \hline
    
    \end{tabular}
    \caption{Average remaining number of stiflers and average remaining proportion of ignorants for $k=1$.}
    \label{tab:Ring}
\end{table*}

It should be noted that Figure \ref{fig:Var_Nodos} also shows that the curves for $x(n,k)$ pertaining to different values of $n$ collapse after a certain value of $k$, while a similar collapse is evidenced for all the curves of $m(n,k)$ before a certain value of $k$. This suggests the existence of a ``critical'' value of $k$, as a function of $n$, after which we could expect a behavior similar to that of the model on the complete graph. We explore this behavior in Figure \ref{fig:Time_completegraphs}. Note that if we observe $m(n,k)$ at the logarithmic scale and focus on doubling the number of nodes, then we obtain almost parallel lines with equal spacing, suggesting the same behavior of the complete graph. Indeed, if we consider Corollary \ref{cor:xinfRK}, for the complete graph, $m^{(n)} \approx 2\,n\,(1-x_\infty).$ Then, for $n$ sufficiently large, we obtain $\log m^{(2n)}-\log m^{(n)}\approx \log 2$, constant, which is observed from simulations for $\log m(2n,k)- \log m(n,k)$ for all values of $k$ from a given point (see Figure \ref{fig:Time_completegraphs} (b)).

\begin{figure*}[h!]
    \centering
    \subfigure[]{\includegraphics[width=0.457\linewidth]{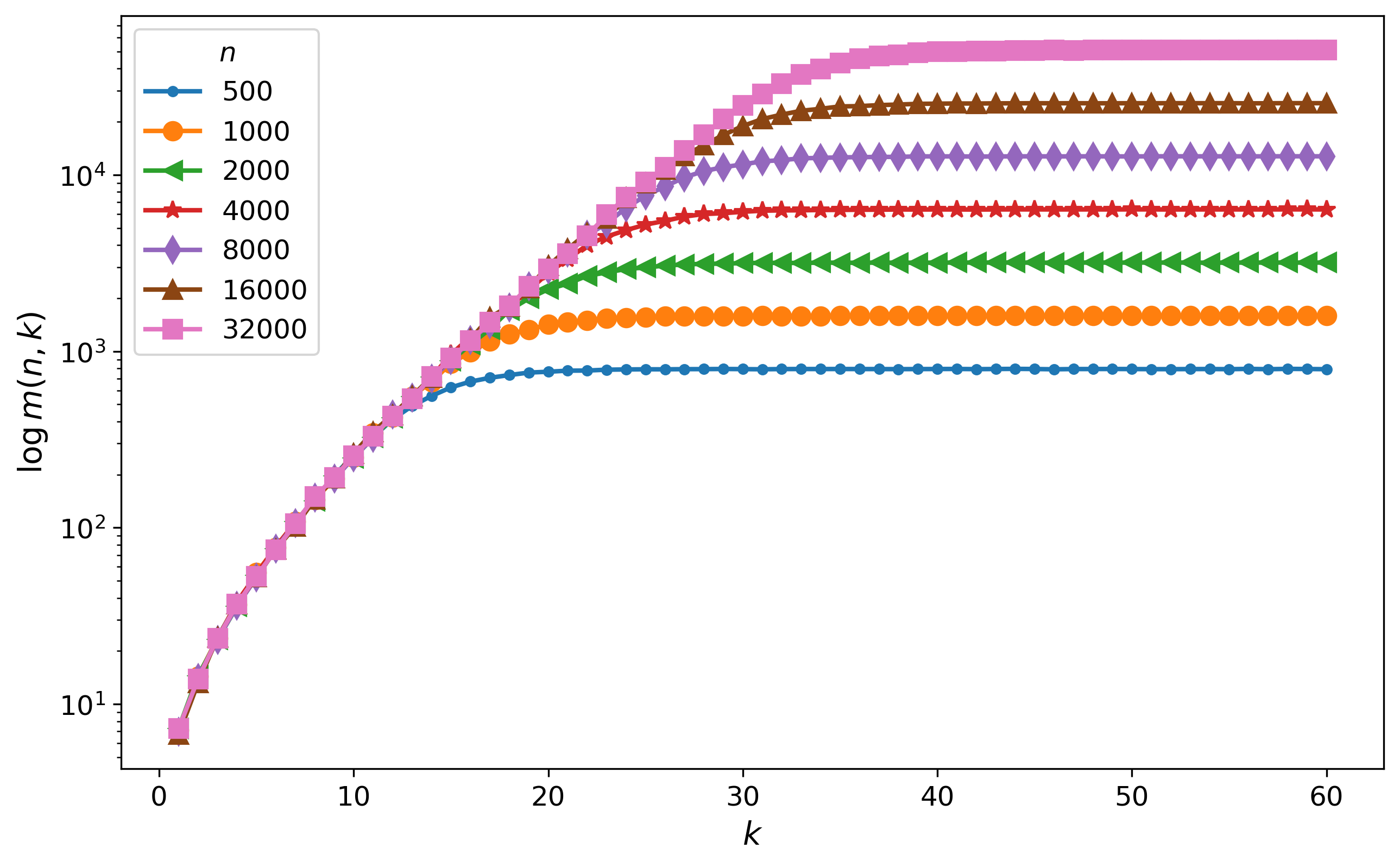}}
    \subfigure[]{\includegraphics[width=0.525\linewidth]{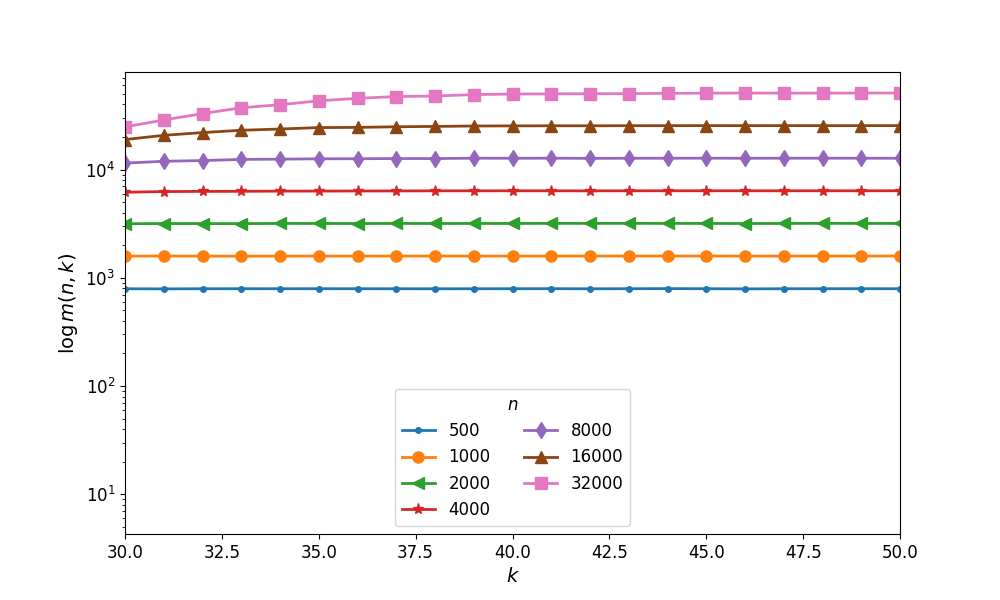}}
    \caption{(a) The average number of transitions up to the end of the process $m(n,k)$, is shown versus $k$ for different values of $n$ at logarithmic scale. (b) From a certain value of $k$, the curves for $m(n,k)$ shown at logarithmic scale and doubling the number of nodes, result in parallel lines with equal spacing suggesting the same behavior of the complete graph (see Corollary \ref{cor:xinfRK}). } 
    \label{fig:Time_completegraphs}
\end{figure*}

\subsection{Logistic approximation for the final proportion of ignorants}

Note that $x_\infty \leq x(n,k) \leq 1$, where $x_\infty\approx 0.203188$ is the asymptotic remaining proportion of ignorants for the Maki-Thompson model on the complete graph (see \eqref{eq:xinf}). Furthermore, from Figure \ref{fig:Var_Repeticiones} and Figure \ref{fig:Var_Nodos}, it is possible to relate the obtained results to a logistic function in the sense that we can assume $x(n,k)\approx \tilde{x}(n,k)$, for $k\in \{1,2,\ldots\}$, where
\begin{equation}
   \tilde{x}(n,k)=\frac{z_\infty}{1+e^{-\alpha(k-\beta)}}+x_\infty,
    \label{eq:Logistic}
\end{equation}
is defined for $k\in(0,\infty)$, $z_\infty=1-x_\infty\approx 0.796812$ is the amplitude and $x_\infty=0.203188$ is the vertical shift of the function. On the other hand, $\alpha:=\alpha(n)$ and $\beta:=\beta(n)$ are parameters that vary with the number of nodes, with $\alpha<0$ because it is a logistic function reflected with respect to the $y$ axis, and $\beta$ refers to the magnitude of the horizontal shift of the function. Figure \ref{fig:fit} shows the curve $\tilde x(n,k)$, which is a fitting plot obtained for some values of $n$ in blue, and the values from $x(n,k)$, which are the data obtained from the simulation. In addition, the values of $\alpha$ and $\beta$ obtained for each fit are included.  

\begin{figure*}[h!]
    \centering
    \subfigure[$\alpha = -0.370505$ y $\beta = 14.609496$.]{\includegraphics[width=0.45\linewidth]{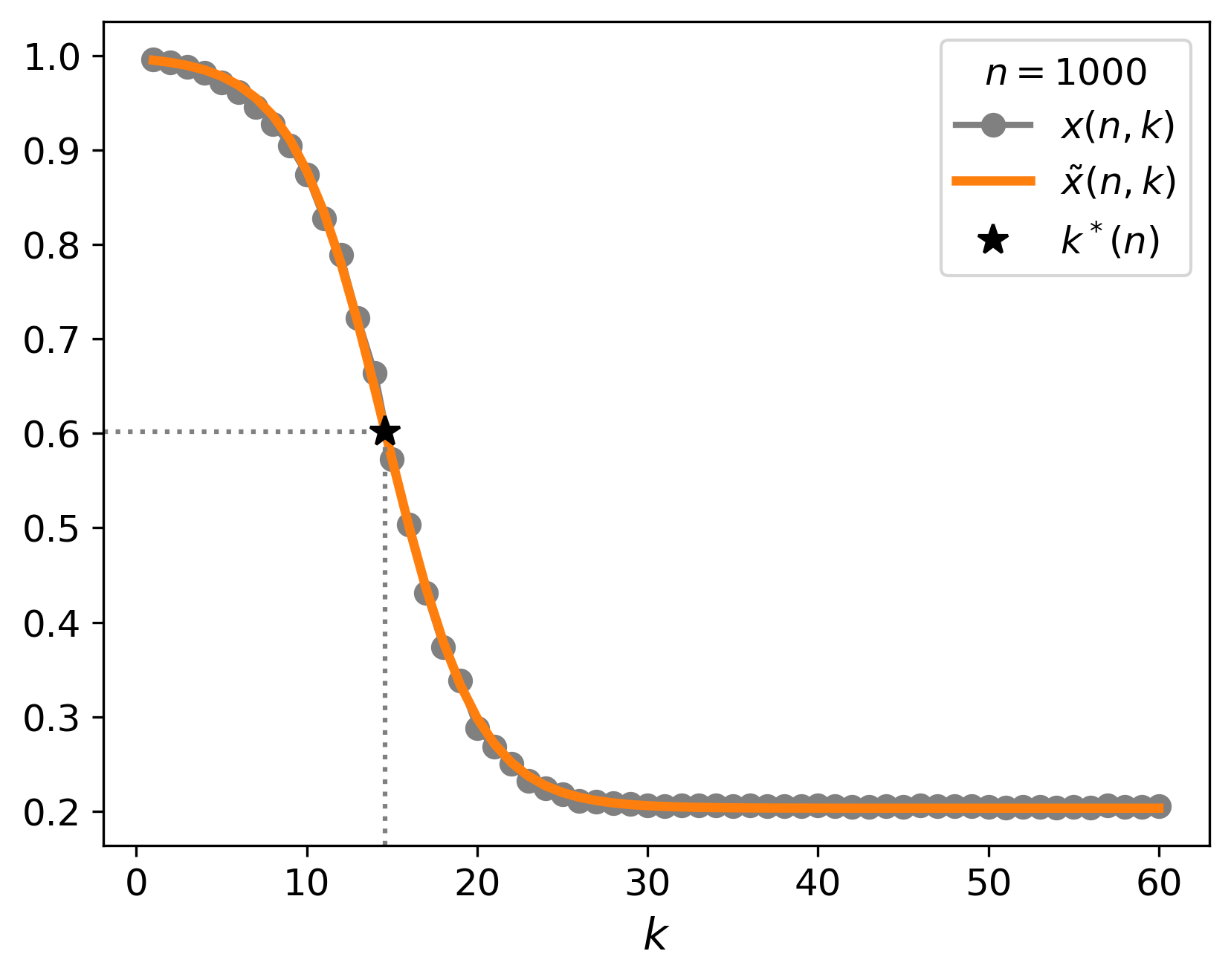}}
    \subfigure[$\alpha = -0.353508$ and $\beta = 17.599477$.]{\includegraphics[width=0.45\linewidth]{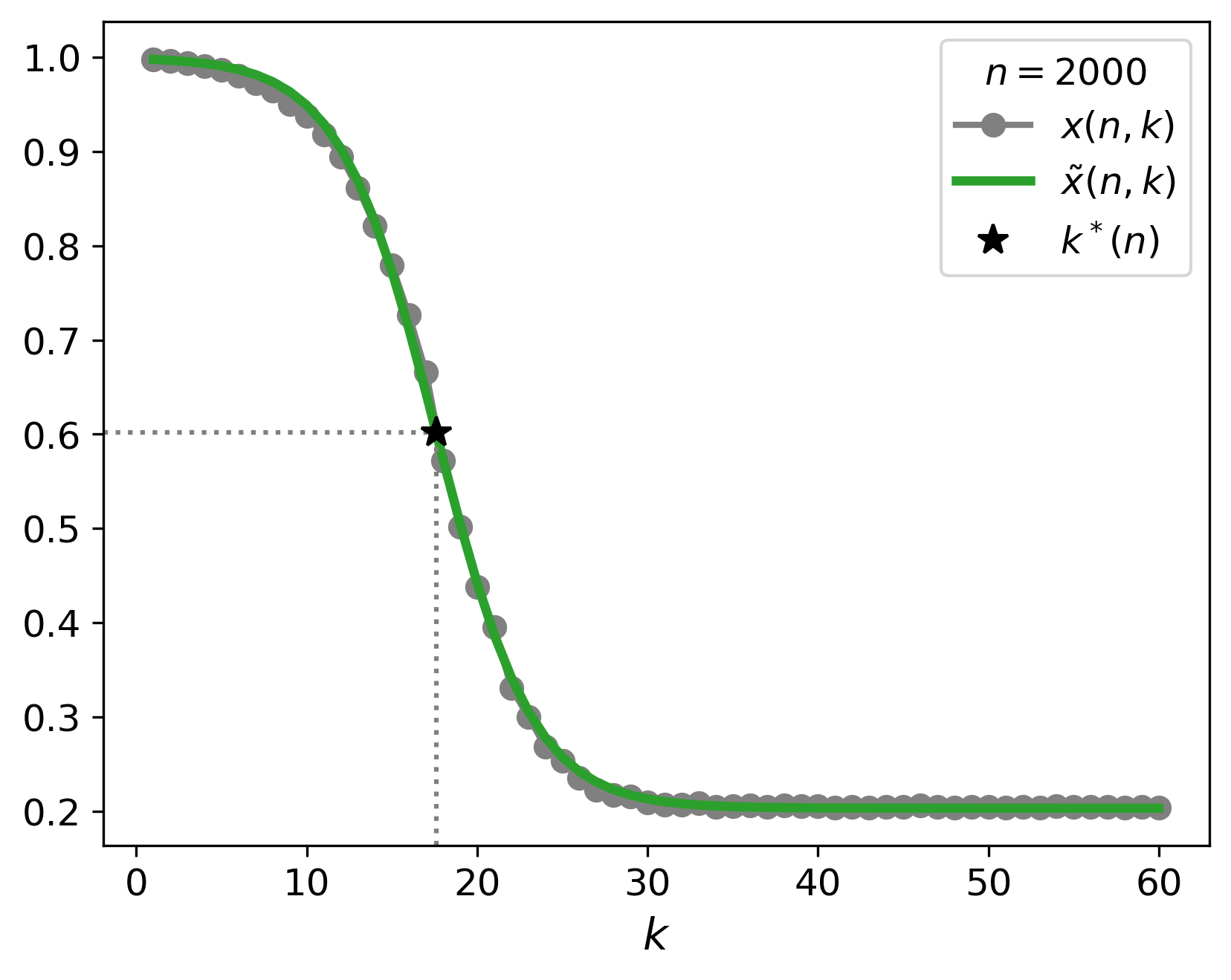}}
    \subfigure[$\alpha = -0.337263$ and $\beta = 20.575646$.]{\includegraphics[width=0.45\linewidth]{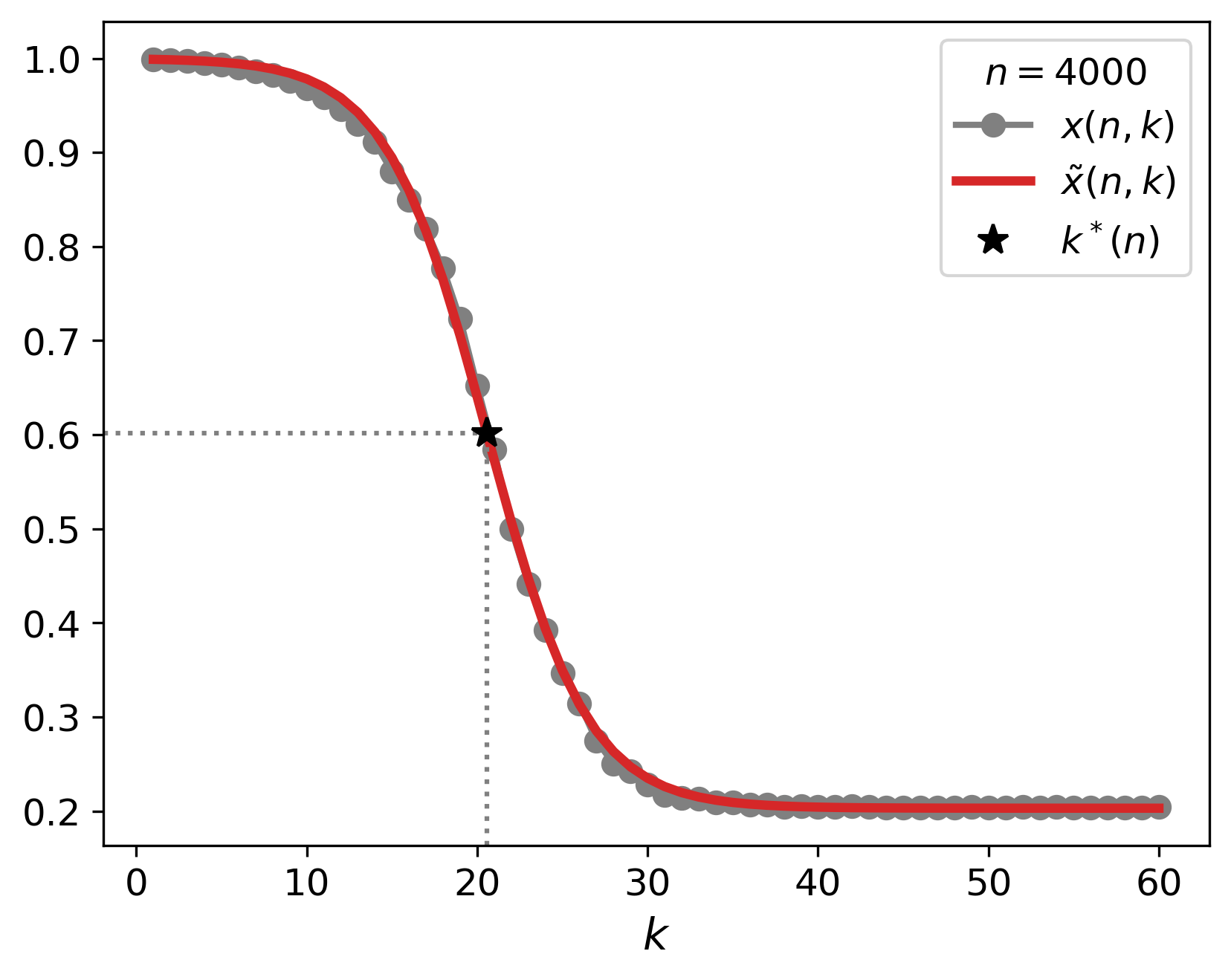}}
    \subfigure[$\alpha = -0.332851$ and $\beta = 23.688934$.]{\includegraphics[width=0.45\linewidth]{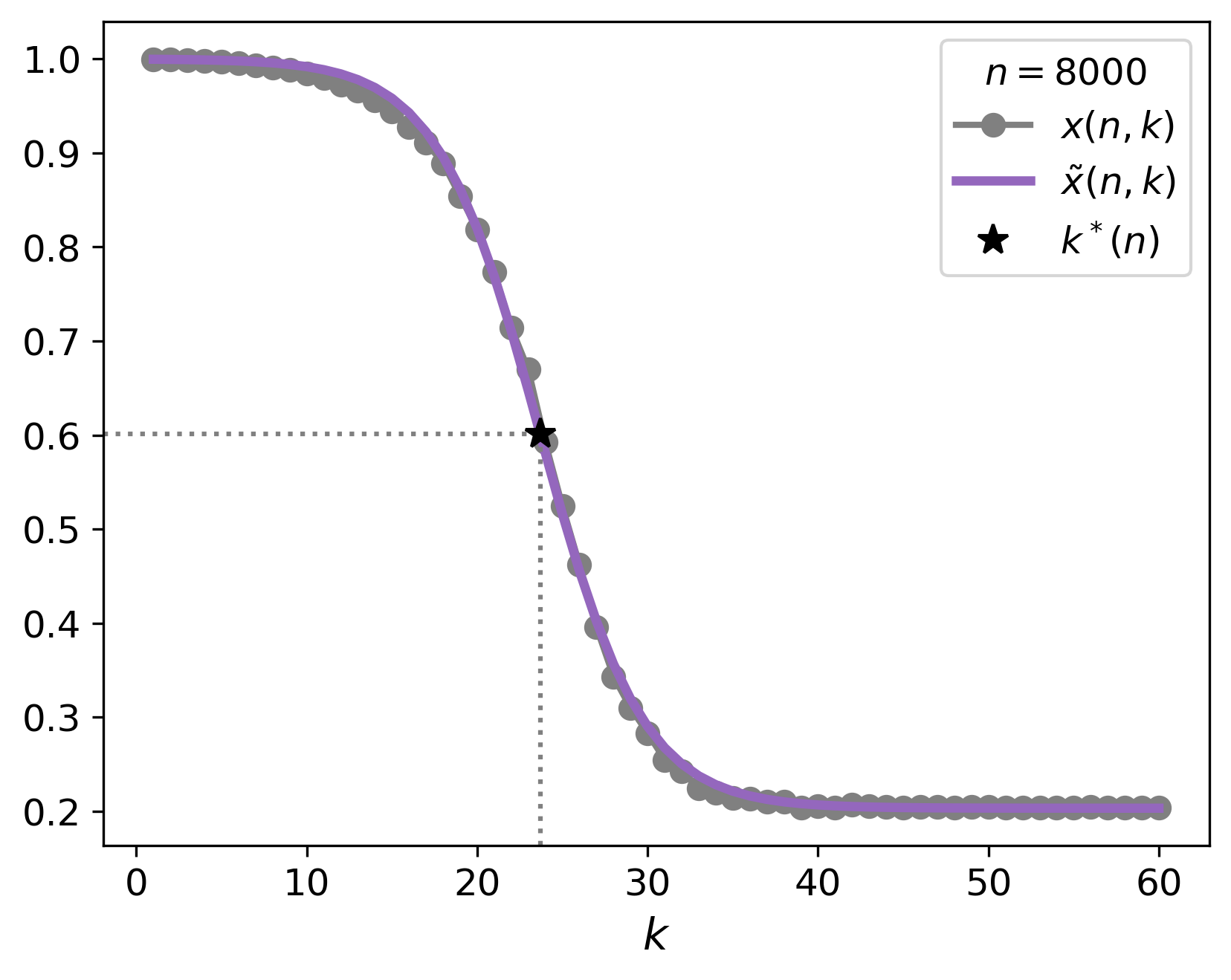}}
    \subfigure[$\alpha = -0.324483$ y $\beta = 26.763353$.]{\includegraphics[width=0.45\linewidth]{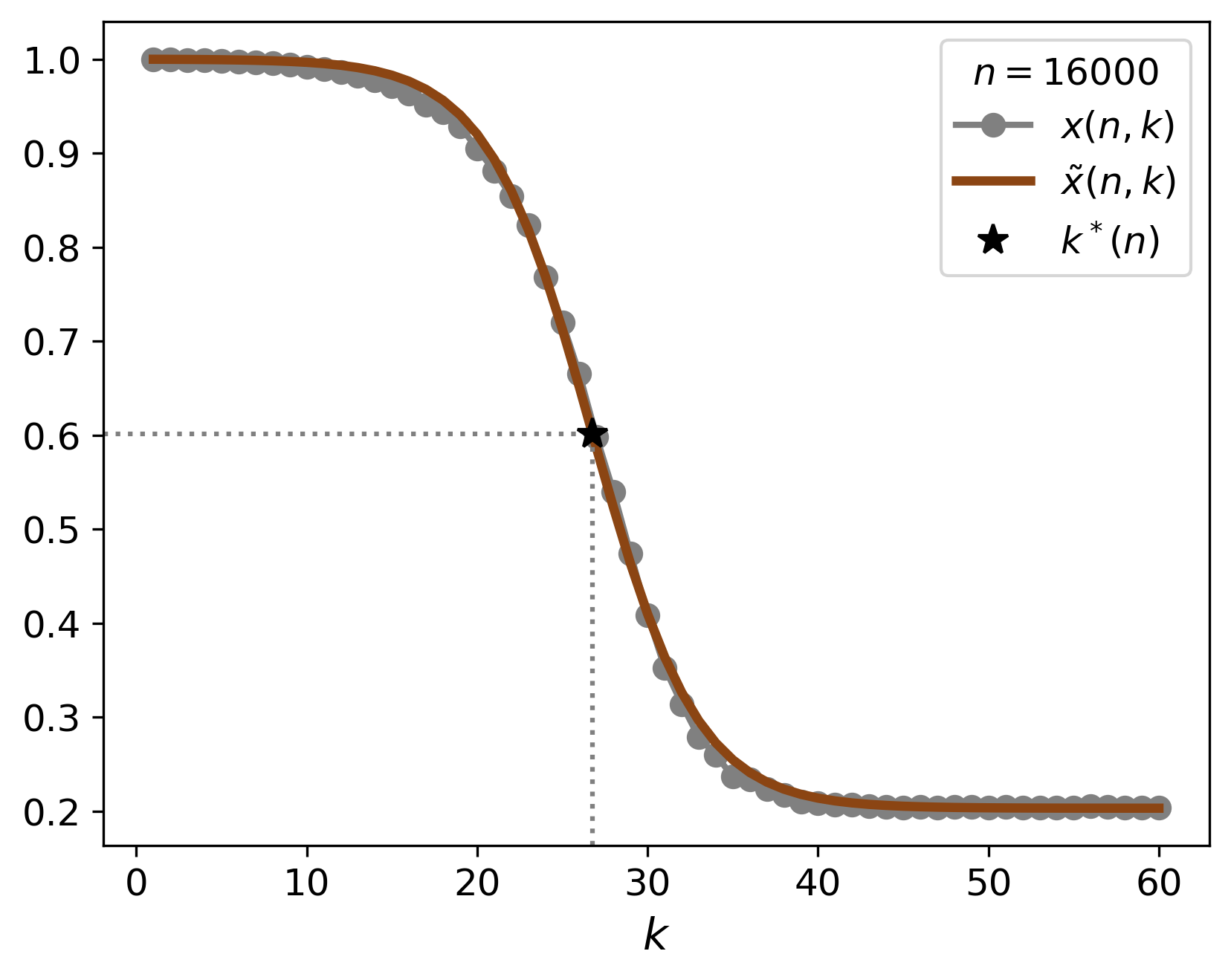}}
    \subfigure[$\alpha = -0.320134$ and $\beta = 29.997113$.]{\includegraphics[width=0.45\linewidth]{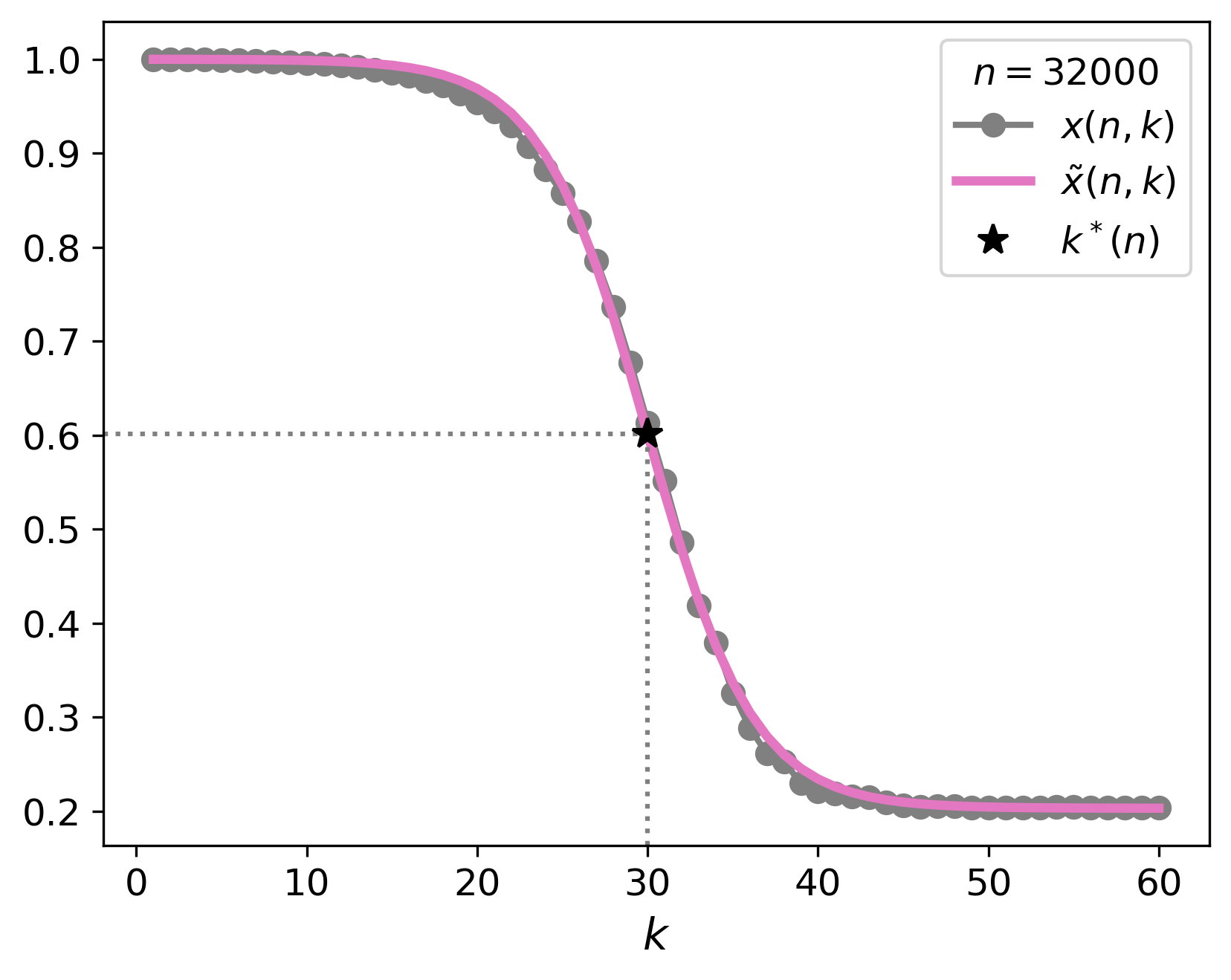}}
    \caption{Fit of the empirical average of the final proportion of ignorants, $x(n,k)$ (gray dots), to a logistic function, $\tilde x(n,k)$ (curves), for $n=1000$. The inflection point of $\tilde{x}(n,k)$, which occurs at $(k^*(n),0.6015935)$, is identified with a black star.}
    \label{fig:fit}
\end{figure*}

At the intersection of the two dotted lines in Figure \ref{fig:fit}, we obtain the point $(k^*,z_\infty/2+x_\infty)$ at which the inflection point of $\tilde{x}(n,k)$ occurs, where $z_\infty/2+x_\infty\approx 0.6015935$ and $k^*:=k^*(n)$. Because of the symmetry and behavior of the function, this point is of interest for the analysis. It is concluded that $k^*=\beta$, because substituting $\tilde x(n,k^*)=z_\infty/2+x_\infty$ in Equation \eqref{eq:Logistic} gives
\begin{equation}
    \dfrac{1}{2}=\dfrac{1}{1+e^{-\alpha(k^*-\beta)}},
\end{equation}
and from here we have $-\alpha(k^*-\beta)=0$ and $\alpha \neq 0$.  By representing the values of $k^*$ with respect to $n$, at logarithmic scale, we obtain that $k^*(n)\approx \beta_0\log n + \beta_1,$ for given constants $\beta_0$ and $\beta_1$, see Figure \ref{fig:N_kstar}(a). Thus, by a straight calculation, we conclude that $\beta_0 =  4.435449$ and $\beta_1 = -16.173315$. See the final approximation in Figure \ref{fig:N_kstar}(b). Since $k^*=\beta$, and $z_\infty = 1-x_\infty$, substituting into the equation \eqref{eq:Logistic} yields

\begin{equation}
\tilde x(k,n)=\dfrac{1-x_\infty}{1+e^{-\alpha(k-(\beta_0\log n + \beta_1))}}+x_\infty.
    \label{eq:LogistickN}
\end{equation}

\begin{figure*}[h!]
    \center
    \subfigure[]{\includegraphics[width=0.47\linewidth]{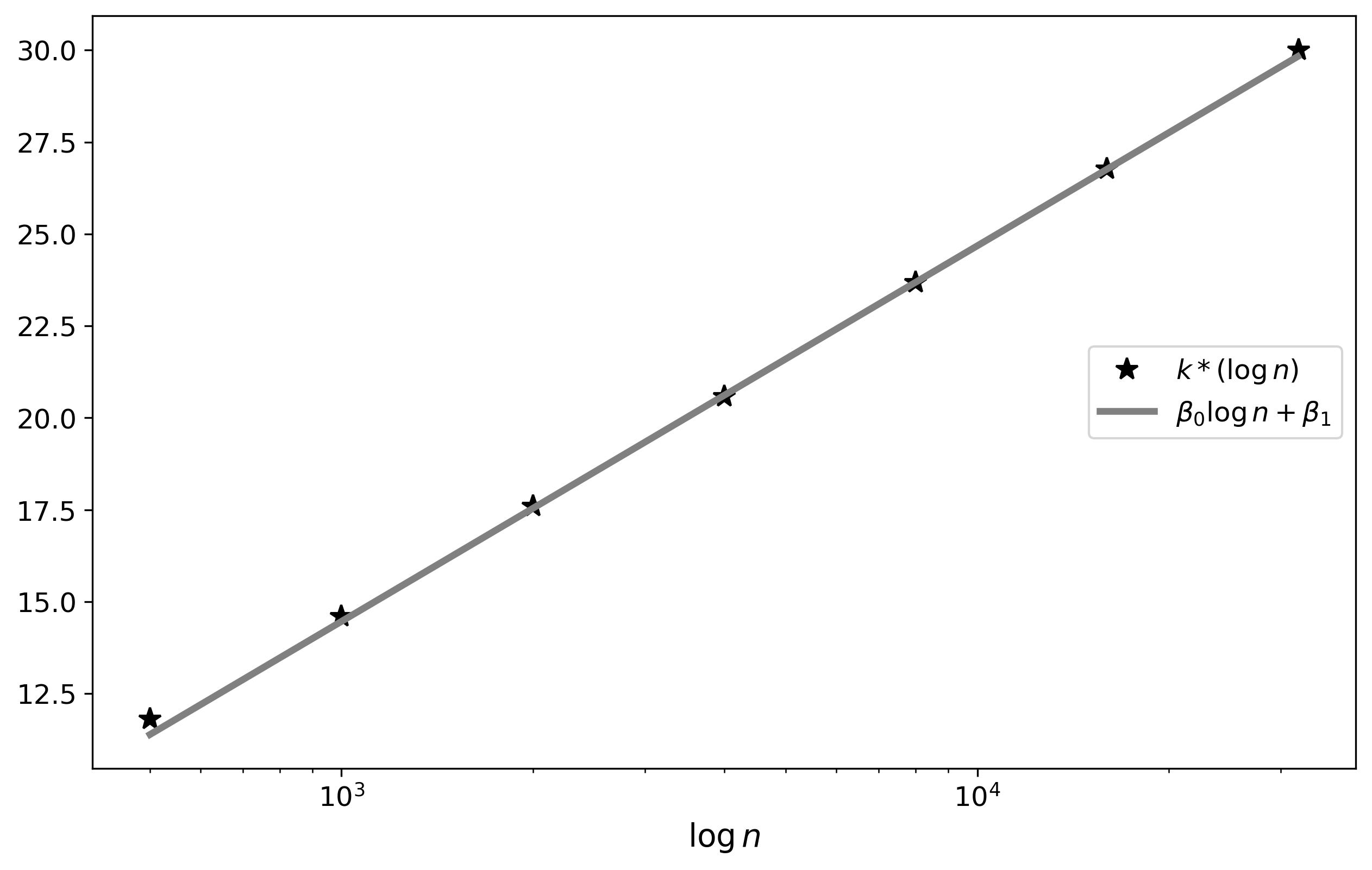}}
    \subfigure[]{\includegraphics[width=0.47\linewidth]{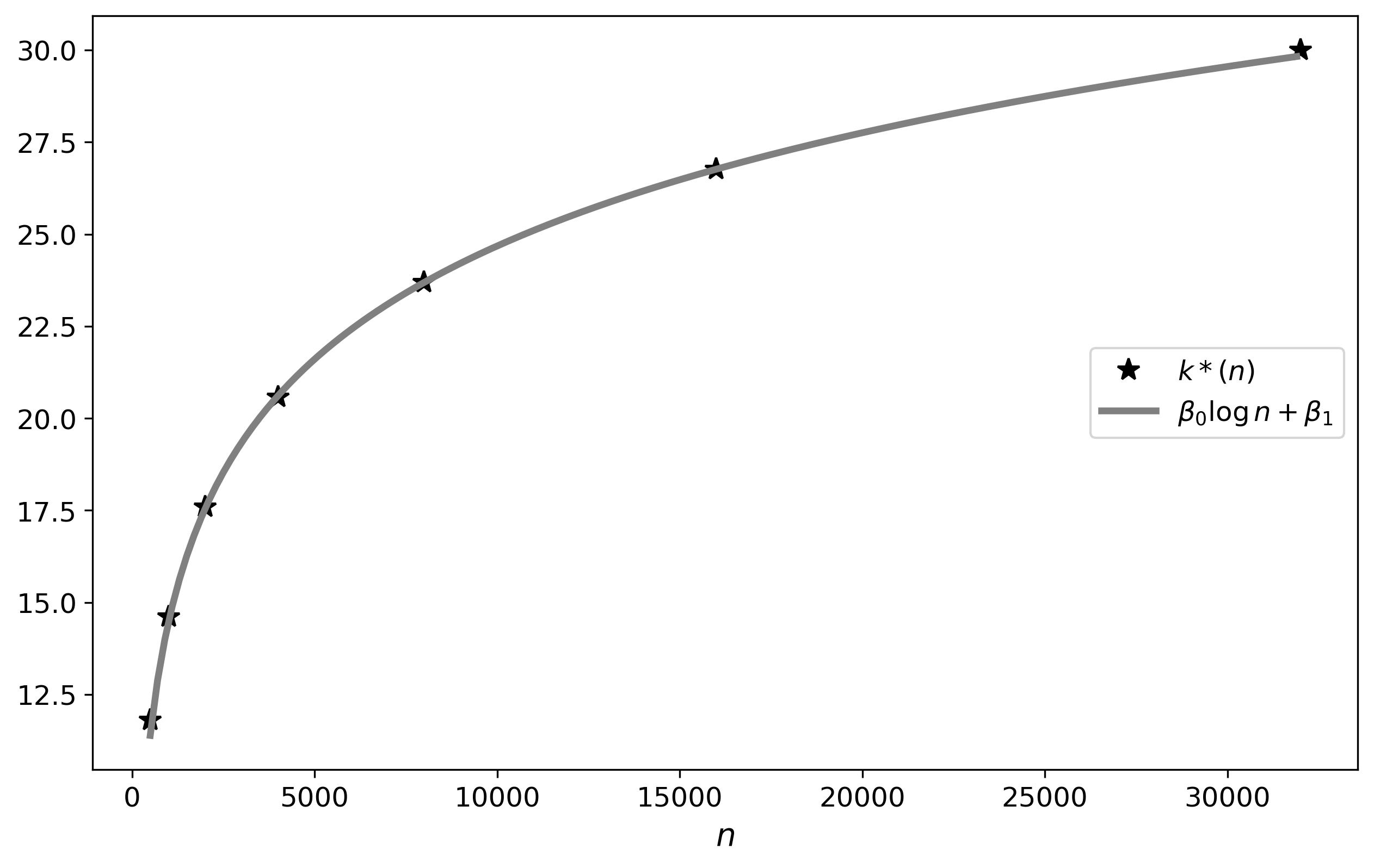}}
    \caption{(a) Representation of the values of $k^*$ with respect to $n$ at logarithmic scale and identification of its approximating line. (b) Illustration of the approximation between $k^*(n)$ and $\beta_0\log n+\beta_1 $, where $\beta_0 =  4.435449$ and $\beta_1 = -16.173315$ are obtained from the approximation in (a), for $n\in\{500, 1000, 2000, 4000, 8000, 16000, 32000\}$.}
    \label{fig:N_kstar}
\end{figure*}

\subsection{A critical degree for which an interesting behavior emerges}

The preceding sections are particularly interesting when one realizes the existence of a critical degree value for each n, after which the behavior concerning the proportion of nodes reached by the information is analogous to that observed in homogeneously mixed populations, that is, the complete graph. In other words, the remaining proportion of ignorants is approximately $0.203$, provided that $k$ is greater than this critical value. Given the numerical nature of our approach, it is subject to errors. Consequently, rather than identifying the precise critical degree value, our investigation focused on the order of magnitude as a function of $n$. Now fix $\epsilon>0$ small enough and let us find for $n$ large enough a value $k_{\epsilon}(n)$ such that

\begin{equation}\label{eq:diff}
|x(k,n) - x_{\infty}| < \epsilon \,\,\, \text{ for any } k\geq k_{\epsilon}(n).
\end{equation}

That is, for $\epsilon>0$, we define for $n$ large enough:

\begin{equation}\label{eq:criticalk}
k_{\epsilon}(n):=\min\{k\geq 1: |x(k,n) - x_{\infty}| < \epsilon\}.
\end{equation}

Note that \eqref{eq:criticalk} is well defined because $x(k,n)$ is non-increasing as a function of $k$. Moreover, note that \eqref{eq:diff} is equivalent, by \eqref{eq:LogistickN} and the fact that $x(k,n)\approx \tilde x(k,n)$ for $k\in\{1,2,\ldots\}$, to

$$
\dfrac{1-x_\infty}{\epsilon} -1 < e^{-\alpha(n)(k-(\beta_0\log n +\beta_1) )}.
$$

This, in turns, is equivalent to

$$
\log\left(\dfrac{1-x_\infty}{\epsilon} -1\right) < -\alpha(n)(k-(\beta_0\log n + \beta_1)).
$$

In other words, since $\alpha(n) <0$ and $\beta_1 <0$, we can rewrite the previous inequality to obtain:

$$
k> |\alpha(n)|^{-1}\log\left(\dfrac{1-x_\infty}{\epsilon} -1\right) + \beta_0\log n + \beta_1=:g_{\epsilon}(n).
$$

From the previous steps, we can consider $k_{\epsilon}(n)=\lceil g_{\epsilon}(n)\rceil$. Now, from our simulations we verify that $\alpha(n)\approx \alpha_0\{\log(n+\alpha_1)+\alpha_2\}^{-1}+\alpha_3$, where $\alpha_0 = -0.120584$, $\alpha_1 = 531.528068$, $\alpha_2 = -5.756993$ and $\alpha_3 = -0.294318$, see Figure \ref{fig:alpha_fit}. This in turn implies that $|\alpha(n)|^{-1} = o(\log n)$, and thus $g_{\epsilon}(n)=(o(1)+\beta_0)\log n=O(\log n)$. In particular, $k_{\epsilon}(n)=O(\log n)$.

\begin{figure*}[h!]
    \centering
    \includegraphics[width=0.6\linewidth]{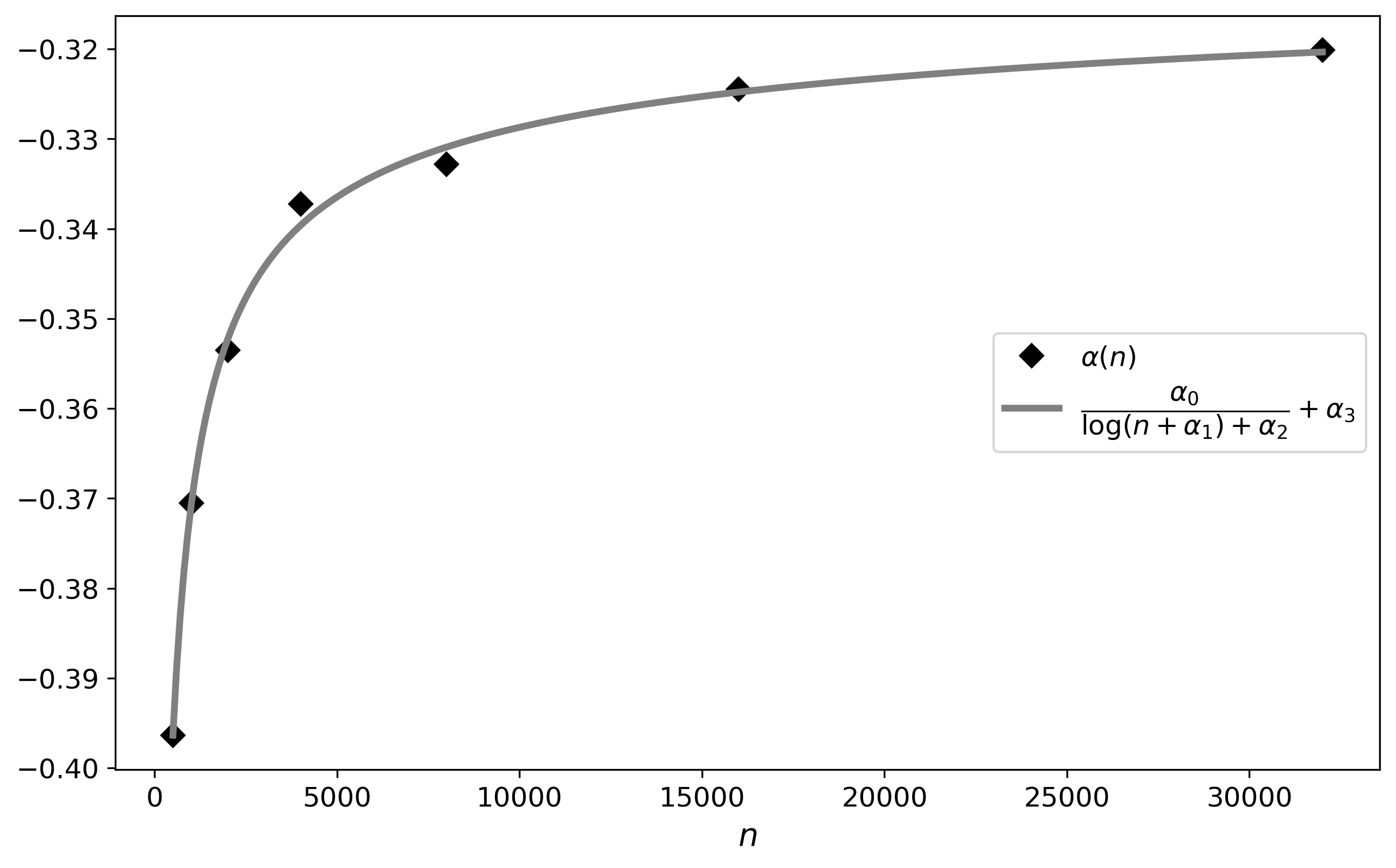}
    \caption{Approximation for $\alpha(n)$ through $\alpha_0\{\log(n+\alpha_1)+\alpha_2\}^{-1}+\alpha_3$, where $\alpha_0 = -0.120584$, $\alpha_1 = 531.528068$, $\alpha_2 = -5.756993$ and $\alpha_3 = -0.294318$, for $n\in\{500, 1000, 2000, 4000, 8000, 16000, 32000\}$.}
    \label{fig:alpha_fit}
\end{figure*}

\section{Conclusion}

We consider the Maki-Thompson rumor model on a $2k$-regular ring lattice with $n$ nodes. After discussing theoretical results for two extreme cases--the complete graph ($k = \lfloor n/2 \rfloor$) and the cycle graph ($k = 1$)--we carry out a numerical analysis using Monte Carlo simulations for general values of $k$. We focus on how the behavior of the process depends on $k$ as a function of $n$. We find that the final proportion of ignorant individuals is approximately the same as that in the complete graph, provided that $k(n)$ is on the order of $\log n$. The significance of studying the ring lattice lies in the fact that this type of graph forms the basis for several small-world network models, such as the Watts--Strogatz network and its Newman--Watts variant. For these networks, it is well known that when a rumor spreads, the process undergoes a transition between localized and propagation at a specific level of network randomness \cite{EAP,zanette,zanette02}. These findings highlight the strong influence of shortcuts on the spread of information in such networks. {\color{black}We emphasize that although those works also suggest that the difference between the regimes decreases as $k$ grows, their analyses were restricted to small or fixed values of $k$. Therefore, as $n$ increases, the role of $k$ is not taken into account in the same way as in our work.} Our approach allows us to show that even without shortcuts, the model can display behavior, in terms of the proportion of informed nodes, similar to what is observed in homogeneously mixed populations. {\color{black}Moreover, since we identify the value of $k$ as a function of $n$ for which this behavior emerges, showing that it scales as $\log n$, the same behavior can be expected in small-world networks even for small values of the randomness parameter. Note that the condition $k \gg \log n$ was imposed by \cite{watts} to guarantee that the resulting random network is connected. In that work, the authors also suggest examples of actual data that can be modeled by a small-world network, such as the collaboration graph of film actors.}  Since the Maki–Thompson model is one of the earliest mathematical models proposed for rumor spreading and has inspired extensive research on information transmission in graphs, our analysis can be adapted to study more general models of social contagion.

\section*{CRediT authorship contribution statement}

Ana C. D\'iaz Bacca: Conceptualization, Methodology, Formal analysis, Investigation, Writing - Review $\&$ Editing. Pablo M. Rodriguez: Conceptualization, Methodology, Formal analysis, Investigation, Writing - Original Draft, Writing - Review $\&$ Editing, Supervision, Funding acquisition. Catalina M. Rúa-Alvarez: Conceptualization, Methodology, Formal analysis, Investigation, Writing - Original Draft, Writing - Review $\&$ Editing, Supervision.

\section*{Declaration of competing interest}
We have nothing to declare.

\section*{Acknowledgements}
This work was partially supported by the Conselho Nacional de Desenvolvimento Científico e Tecnológico - CNPq (Grant 316121/2023-1), the Funda\c{c}\~ao de Amparo \`a Ci\^encia e Tecnologia do Estado de Pernambuco - FACEPE (Grants APQ-1341-1.02/22 and IBPG-1967-1.02/21), and Fundação de
Amparo à Pesquisa do Estado de São Paulo - FAPESP (Grant 23/13453-5). C. M. Rúa-Alvarez was partially supported by Vicerrectoría de Investigaciones e Interacción Social at Universidad de Nariño.



\begin{thebibliography}{99}


\bibitem{EAP}
Agliari, E., Pachon, A., Rodriguez, P. M. $\&$ Tavani, F. (2017). Phase transition for the Maki-Thompson rumor model on a small-world network, Journal of Statistical Physics, 169, 846-875. 



%
	


\bibitem{raey}
Coletti, C. F., Rodr\'iguez, P. M. $\&$ Schinazi, R. B. (2012). A Spatial Stochastic Model for rumor Transmission, Journal of Statistical Physics, 147, 375-381.

\bibitem{dg}
Daley, D. J. $\&$ Gani, J. (1999). {\it Epidemic Modelling: an Introduction}. Cambridge University Press, Cambridge.   
%
\bibitem{daley_nature}
Daley, D. J. $\&$ Kendall, D. G. (1964). Epidemics and Rumours, Nature, 204, 1118.

\bibitem{kendall}
Daley, D. J. $\&$ Kendall, D. G. (1965). Stochastic rumours,  Journal of the Institute of Mathematics and its Applications, 1, 42-55.



\bibitem{grimmett}
Grimmett, J. (2018). {\it Probability on Graphs: Random Processes on Graphs and Lattices}. Institute of Mathematical Statistics Textbooks. Cambridge University Press.

\bibitem{juniormaki}
Junior, V. V., Rodriguez, P. M. $\&$ Speroto, A. (2020). The Maki-Thompson rumor model on infinite Cayley trees, Journal of Statistical Physics, 181 n.4, 1204-1217.


\bibitem{juniorstochastic}
Junior, V. V., Rodriguez, P. M. $\&$ Speroto, A. (2021). Stochastic rumors on random trees, Journal of Statistical Mechanics: Theory and Experiment, 2021 n. 12, 123403.

\bibitem{kurtz}
Kurtz, T.G., Lebensztayn, E., Leichsenring, A.R. $\&$ Machado, F.P. (2008). Limit theorems for an epidemic model on the complete graph, ALEA. Latin American Journal of Probability and Mathematical Statistics, 4, 45e55.
%
%
\bibitem{lebensztayn/machado/rodriguez/2011a}
Lebensztayn, E., Machado, F. P. $\&$ Rodr\'iguez, P. M. (2011). On the behaviour of a rumour process with random stifling, Environmental Modelling $\&$ Software, 26, 517-522.
%
\bibitem{lebensztayn/machado/rodriguez/2011b}
Lebensztayn, E., Machado, F. $\&$ Rodr\'iguez, P. M. (2011). Limit Theorems for a General Stochastic Rumour Model, SIAM Journal on Applied Mathematics, 71, 1476-1486.
%
\bibitem{EP}
Lebensztayn, E. $\&$ Rodriguez, P. M. (2013). A connection between a system of random walks and rumor transmission, Physica A, 392, 5793-5800.

{\color{black}
\bibitem{li}
Li, B. $\&$ Zhu, L. (2024). Turing instability analysis of a reaction–diffusion system for rumor propagation in continuous space and complex networks, Information Processing $\&$ Management, 61(3), 103621. 
}

%
\bibitem{MT}
Maki, D. P. $\&$ Thompson, M. (1973). {\it Mathematical Models and Applications. With Emphasis on the Social, Life, and Management Sciences}. Prentice-Hall, Englewood Cliffs, New Jersey.
%
\bibitem{MNP-PRE2004}
Moreno, Y., Nekovee, M., $\&$ Pacheco, A. F. (2004). Dynamics of rumour spreading in complex networks, Phys. Rev. E, 69, 066130.

%
\bibitem{moreno-PhysA2007}
Nekovee, M., Moreno, Y., Bianconi, G., $\&$ Marsili, M. (2007). Theory of rumour spreading in complex social neworks, Physica A, 374, 457-470.
%
\bibitem{puerres}
Puerres, J., Junior, V.V., $\&$ Rodriguez, P.M. (2025). Critical thresholds in stochastic rumors on trees, Chaos Solit. Fractals, 201, 117373.

%
\bibitem{rada}
Rada, A., Coletti, C., Lebensztayn, E. $\&$ Rodriguez, P. M. (2021). The Role of Multiple Repetitions on the Size of a Rumor, SIAM Journal on Applied Dynamical Systems, 20(3), 1209-1231.
%
\bibitem{resnick}
Resnick, S. I. (2005). {\it A Probability Path}, Birkhauser Boston, 5th printing.
%

{\color{black}
\bibitem{shi}
Shi, J. $\&$ Zhu, L. (2025). Turing pattern theory on homogeneous and heterogeneous higher-order temporal network system, J. Math. Phys., 66(4), 042706. 
}
%

{\color{black}
\bibitem{sha}
Sha, H. $\&$  Zhu, L. (2025). Dynamic analysis of pattern and optimal control research of rumor propagation model on different networks, Information Processing $\&$ Management,
62(3), 104016.}



\bibitem{sudbury}
Sudbury, A. (1985). The proportion of the population never hearing a rumour. Journal of Applied Probability, 22, 443–446.

\bibitem{watson}
Watson, R. (1988). On the size of a rumour, Stochastic Processes and their Applications, 27, 141–149.

{\color{black}
\bibitem{watts}
Watts, D. J. $\&$  Strogatz, S. H. (1998). Collective dynamics of `small-world' networks, Nature 393(6684), 440–442.}

{\color{black}
\bibitem{yang}
Yang, T., Zhu, L., Shen, S. $\&$ He, L. (2025). Pattern dynamics analysis and parameter identification of spatiotemporal infectious disease models on complex networks, Mathematical Biosciences, 387, 109502.
}
%
\bibitem{zanette}
Zanette, D. H. (2001). Critical behavior of propagation on small-world networks, Phys. Rev. E, 64, (R)050901.
%
%
\bibitem{zanette02}
Zanette, D. H. (2002). Dynamics of rumour propagation on small-world networks, Phys. Rev. Lett., 65(4), 041908.
%

{\color{black}
\bibitem{zhu-ding}
Zhu, L., Ding, Y. $\&$ Shen, S. (2025). Green behavior propagation analysis based on statistical theory and intelligent algorithm in data-driven environment, Mathematical Biosciences, 379, 109340.


\bibitem{zhu}
Zhu, L. $\&$  Zheng, T. (2025). Pattern dynamics analysis and application of West Nile virus spatiotemporal models based on higher-order network topology, Bull Math Biol 87, 121.}


\end{thebibliography}
\end{document}